\def\nba{{\mathbf{a}}}
\def\nbs{{\mathbf{s}}}
\def\nbw{{\mathbf{w}}}
\def\nbx{{\mathbf{x}}}
\def\nby{{\mathbf{y}}}
\def\nb0{{\mathbf{0}}}
\def\nb1{{\mathbf{1}}}
\def\nbX{{\mathbf{X}}}
\def\nbY{{\mathbf{Y}}}
\def\nbZ{{\mathbf{Z}}}
\def\ncalA{{\mathcal{A}}}
\def\ncalB{{\mathcal{B}}}
\def\ncalD{{\mathcal{D}}}
\def\ncalF{{\mathcal{F}}}
\def\ncalI{{\mathcal{I}}}
\def\ncalM{{\mathcal{M}}}
\def\ncalS{{\mathcal{S}}}
\def\ncalW{{\mathcal{W}}}
\def\ncalY{{\mathcal{Y}}}
\def\nbbE{{\mathbb{E}}}
\def\nbbN{{\mathbb{N}}}
\def\nbbP{{\mathbb{P}}}
\def\nbbR{{\mathbb{R}}}
\newtheorem{theorem}{Theorem}
\def\E{\mathbb{E}}
\def\sinr{\mathtt{SINR}}			
\theoremstyle{plain}
\newtheorem{lemma}{Lemma}
\newtheorem{prop}{Proposition}
\theoremstyle{definition}
\newtheorem{Def}{Definition}
\newtheorem{assum}{Assumption}
\declaretheoremstyle[
  spaceabove=\topsep, spacebelow=\topsep,
  headfont=\normalfont\bfseries,
  notefont=\mdseries, notebraces={(}{)},
  bodyfont=\normalfont,
  postheadspace=1em,
  qed=\qedsymbol
]{mythmstyle}
\declaretheoremstyle[
  spaceabove=\topsep, spacebelow=\topsep,
  headfont=\normalfont\bfseries,
  notefont=\mdseries, notebraces={(}{)},
  bodyfont=\normalfont,
  postheadspace=1em,
  qed=$\diamond$
]{mythmstyle}
\declaretheorem[style=mythmstyle]{remark}
\DeclareMathOperator{\vE}{\varepsilon}
\DeclareMathOperator{\kl}{\operatorname{D}_{\texttt{KL}}}
\begin{document}

\allowdisplaybreaks

\sloppy

\title{Universal Learning Waveform Selection Strategies for Adaptive Target Tracking}

\author{Charles E. Thornton$^{*}$, R. Michael Buehrer$^{*}$, Harpreet S. Dhillon$^{*}$,\\ and Anthony F. Martone$^{\dagger}$
\thanks{$^{*}$Bradley Department of ECE, Virginia Tech, Blacksburg, VA, USA, 24061. Correspondence: $thorntonc@vt.edu$. $^{\dagger}$ US Army Research Laboratory, Adelphi, MD, USA, 20783. A preliminary version of this work was presented at MILCOM 2021, San Diego, CA \cite{Thornton2021}. Support from the US Army Research Office (ARO) made this work possible.}}

\maketitle
\thispagestyle{plain}
\pagestyle{plain}
\vspace{-1cm}
\begin{abstract}
Online selection of optimal waveforms for target tracking with active sensors has long been a problem of interest. Many conventional solutions utilize an estimation-theoretic interpretation, in which a waveform-specific Cram\'{e}r-Rao lower bound on measurement error is used to select the optimal waveform for each tracking step. However, this approach is only valid in the high SNR regime, and requires a rather restrictive set of assumptions regarding the target motion and measurement models. Further, due to computational concerns, many traditional approaches are limited to near-term, or \emph{myopic}, optimization, even though radar scenes exhibit strong temporal correlation. More recently, reinforcement learning has been proposed for waveform selection, in which the problem is framed as a Markov decision process (MDP), allowing for long-term planning. However, a major limitation of reinforcement learning is that the memory length of the underlying Markov process is often unknown for realistic target and channel dynamics, and a more general framework is desirable. This work develops a universal sequential waveform selection scheme which asymptotically achieves Bellman optimality in any radar scene which can be modeled as a $U^{\text{th}}$ order Markov process for a finite, but unknown, integer $U$. Our approach is based on well-established tools from the field of universal source coding, where a stationary source is parsed into variable length phrases in order to build a \emph{context-tree}, which is used as a probabalistic model for the scene's behavior. We show that an algorithm based on a multi-alphabet version of the Context-Tree Weighting (CTW) method can be used to optimally solve a broad class of waveform-agile tracking problems while making minimal assumptions about the environment's behavior. 
\end{abstract}

\begin{IEEEkeywords}
Radar waveform selection, target tracking, interference mitigation, reinforcement learning, universal prediction.
\end{IEEEkeywords}

\section{Introduction and Prior Work}\label{sec:intro}

\subsection{Waveform-Agile Tracking}
In active tracking systems, such as radar and sonar, the choice of transmitted waveform plays a significant role in determining both the nature and quality of target information obtained at the receiver \cite{Woodward1953}. The essential properties of a waveform are identified through the \emph{Woodward ambiguity function} (AF), which specifies the matched-filter response to the given waveform over a two-dimensional space of delay and Doppler values \cite[Ch. 3]{Levanon2004}. There is unfortunately no general procedure for synthesizing a signal directly from the AF. Traditionally, appropriate waveforms have been selected by human experts such that the expected matched-filter response is appropriate for a specific application \cite{Aasen1976}. However, when a target is being tracked in a dynamic environment, the system's need for a particular type of measurement, such as in the range or Doppler dimension, will vary depending on instantaneous uncertainty. For example, the tracker's need may depend on the channel's state, the target's state, and the sequence of previously recorded measurements. In many dynamic scenarios, it is difficult to predict the resolution requirements of a tracking system \emph{a priori}, and on-the-fly waveform selection schemes are necessary.\\ 

Appreciable performance benefits have been demonstrated when the AF of the transmitted waveform is dynamically tailored to the state of the sensing environment \cite[and references therein]{Sira2009}. It is also worth noting parallels between waveform-agile radar tracking and the echolocation strategy employed by bats, which use a variety of frequency modulated and constant frequency pulses when hunting a target, instead of simply using a waveform with a large time-bandwidth product, as one might expect \cite{Simmons1992}.\\

Recognizing the need for dynamically tailored waveforms, two intimately related questions arise:\\
\begin{enumerate}
	\centering
	\item ``Which waveform is optimal for a particular sensing environment?" \newline \emph{(Waveform design, studied in \cite{Bell1988,Bell1993,Hong2005,Sira2007,Sira2007a,Zhu2017,Kay2009,Kay2007,DeLong1969} et al.)}
	\item ``How should a real-time system select appropriate waveforms using sequential observations?" \emph{(Waveform selection, studied in \cite{Kershaw1994,Kershaw1997,Niu2002,LaScala2005,PLiu2020,Thornton2020} et al.)} \vspace{.4cm}
\end{enumerate}

In this paper, we focus on the latter question, while touching on notions of waveform optimality from the former. In particular, we pose the waveform selection problem as a Markov decision problem (MDP) of finite, but unknown order. This formulation is very general, and few assumptions about the statistical behavior of the underlying radar scene are made. To find an optimal transmission policy, we develop a novel reinforcement learning algorithm based on the Context-Tree Weighting (CTW) approach to universal data compression, which is well-known to provide both strong theoretical guarantees and practical performance in both source coding and sequential prediction problems \cite{Willems1995,Willems1998,Begleiter2004}. We show that this algorithm converges to a Bellman-optimal policy asymptotically\footnote{Bellman-optimal policies are not guaranteed to be unique} for a very broad class of scenes while making few assumptions about the statistical behavior of the scene itself. We also demonstrate that an implementation of this approach results in performance improvements over traditional adaptive waveform selection strategies, and results in favorable finite-time behavior compared with the Lempel-Ziv inspired universal reinforcement learning scheme proposed in \cite{Farias2010}. To the best of our knowledge, this is a new intersection between radar waveform selection and universal data compression, which presents opportunities for future work.

\subsection{Information-Theoretic Interpretation of Radar Signal Design and Selection}
A natural perspective from which to analyze radar signals is that of information theory, an approach first taken by Woodward and Davies in the 1950s \cite{Woodward1953,Woodward1951}. Although radar is fundamentally a measurement system, rather than one of communication, it is intuitive to frame the problem of transmission and reception in radar systems from the perspective of rate distortion theory, as the goal is to represent an unknown target process as efficiently as possible. \\

The information theoretic view of waveform design can be briefly summarized as follows. Consider a random vector of target parameters, $\mathbf{Z}$, a known transmitted waveform $\nbX$, and a random received signal $\nbY$. The waveform design problem consists of selecting $\nbX$ such that the mutual information $\operatorname{I}(\nbY;\nbZ|\nbX)$ is maximized. This Shannon-theoretic interpretation is adopted by Bell in his seminal work, which treats the two-way propagation of a radar transmission as an extension of the conventional communication channel \cite{Bell1993,Bell1988}. The power spectrum of the transmitted waveform is then matched to the target channel's impulse response using a ``water-filling" strategy, to ensure that mutual information between the target and received signal is maximized.\\

From a mathematical standpoint, the benefits of adaptive waveform selection are of little surprise, given well-established results in communication theory which document that an information source and channel must be in some sense \emph{probabalistically matched} to achieve optimal transmission \cite{Gastpar2003,Tatikonda2009}. While Bell's work provides very general bounds on the estimation capabilities associated with specific waveform-filter pairs, it tells us little as to how these bounds can be achieved in a practical system, where the target and channel statistics are unknown \emph{a priori} and the radar has the ability to adapt based on limited feedback. As a result, a large number of contributions have favored a statistical signal processing interpretation, selecting waveforms based on the criteria of expected Fisher information.

\subsection{Statistical Signal Processing Interpretation of Radar Signal Design and Selection}
In the signal processing literature, estimation-theoretic approaches for waveform selection based on the principle of Fisher information have been widely considered \cite{Kershaw1994,Kershaw1997,Sira2007,Hong2005}. Under this framework, a waveform-specific Cram\'er-Rao lower bound (CRLB) on target measurement errors can be obtained from the curvature of the ambiguity function at the origin of delay-Doppler space \cite{trees1992detection}, providing a strong theoretical basis for discrimination between waveforms. Unfortunately, the CRLB is only achieved at high $\texttt{SNR}$ values, which are rarely encountered in practical radar scenarios \cite{Sira2009}. \\

Additionally, CRLB-based approaches do not consider AF sidelobes, which may have a substantial impact on measurement quality, especially when perfect target detection cannot be assured. In the case of imperfect detection, the tracking error may be most efficiently reduced by taking measures to ensure reliable detections. For example, in \cite{Sira2007a} a two-stage procedure is used to gather information about the environment's clutter statistics and then perform online waveform design accordingly. However, only detection performance is considered and the procedure does not consider Doppler processing. The work in \cite{Niu2002} demonstrates improved tracking performance when the covariance matrix of the measurement error is calculated based on the AF resolution cells, which are regions of the delay-Doppler response which tightly enclose the AF shape of the transmitted waveform. This technique is suitable for low-SNR scenarios, since AF sidelobes are included in each cell. However, such a sampling grid can be difficult to establish for many waveforms of practical interest, and the waveform-dependent measurement noise covariance matrix must be computable in closed form. 

\subsection{Machine Learning Approaches to Waveform Selection}
A significant limitation of previous analytic solutions is that due to computational constraints, the tracker is usually limited to one-step ahead, or \emph{myopic}, optimization. However, the sensing environment for tracking radar usually has extended temporal correlations. This is especially relevant when the radar is interference or clutter-limited \cite{Ward1997,Siegal1994,Rosenberg2012}. Thus, \emph{planning} future waveforms based on the scene's transition behavior becomes an important consideration \cite{Charlish2015}. It has been shown that sea clutter exhibits both temporal and spatial correlations \cite{Ward1990,Rosenberg2012}. Moreover, interference in dense wireless networks also exhibits temporal correlations, both in ad-hoc \cite{Ganti2009} and cellular networks \cite{Krishnan2017}. Interference is often a serious performance bottleneck for tracking radar, which operates in a low $\sinr$ regime, and spectrum sharing scenarios are expected to become more widespread in coming years \cite{Griffiths2015}. Thus, the need to address arbitrary temporal correlation is likely to grow over time. The demand for memory-based information processing has been further supported by a surge of research interest in so-called \emph{cognitive radar} systems. In \cite{Haykin2012}, Haykin argues that multi-scale memory is a key component of any radar system that might be considered `cognitive'. Haykin suggests using the \emph{encoding-decoding} principle, where memory is implemented using a fixed length encoder and decoder. In the present work, we suggest extending this principle to \emph{variable length coding}. \\

In recent years, several authors have applied machine learning techniques to the waveform selection problem. In particular, reinforcement learning has been extensively proposed, due to the ease with which feedback can be obtained from radar transmissions and the natural sequential structure of the waveform selection and related sensor management problems \cite{Thornton2020,Thornton2021b,Selvi2020,PLiu2020,Charlish2015,Krishnamurthy2002,Krishnamurthy2009}. Further, it is intuitive to consider a Markov channel model for the target tracking problem, as most tracking systems implicitly make a Markov assumption about the target's behavior \cite{Ristic2004}. Unfortunately, a major limitation in these formulations is ensuring that the scene is properly modeled as a MDP, for which the memory of the target channel must be known. While recurrent neural networks have been proposed as a means of addressing extended temporal correlations \cite{Thornton2020}, little can be said about the optimality or robustness of such an approach. To make fewer assumptions about the scene, it is of interest to design a system which is agnostic toward the memory length of the target channel. \\

\begin{figure*}[t]
	\centering
	\includegraphics[scale=0.8]{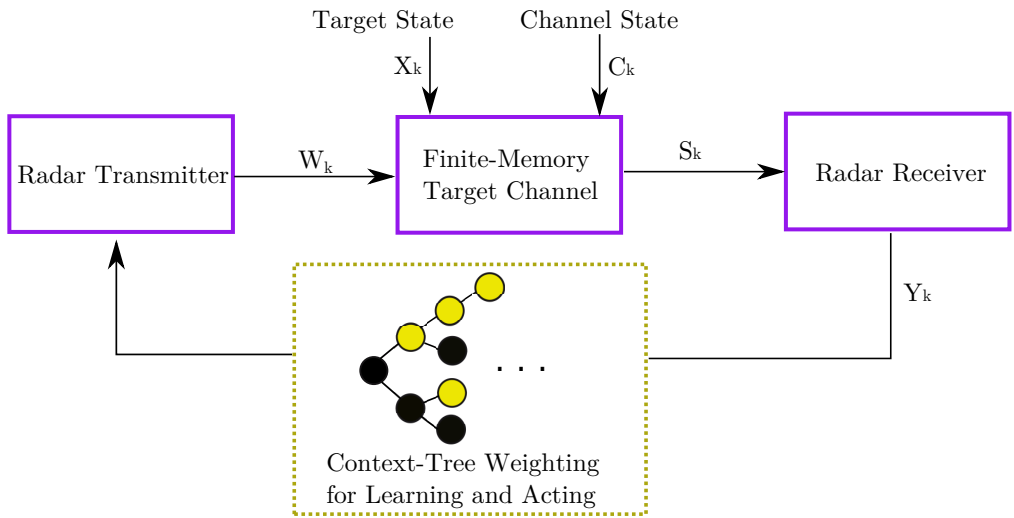}
	\caption{\textsc{Block diagram} of the proposed closed-loop radar system. The radar learns a context-tree model to caputre the unknown dynamics of the finite memory, finite state, target channel and sequentially selects waveforms accordingly.}
	\label{fig:block}
\end{figure*}

The current work builds on the paradigm of multi-scale memory, and addresses issues with previous reinforcement learning-based waveform selection approaches by applying sequence processing techniques from universal source coding \cite{Ziv1977} and universal prediction \cite{Merhav1998}. In particular, we consider the target channel to be a suffix-tree information source. We then apply a CTW-inspired algorithm, which builds a probabilistic model of a stationary tree source without a known distribution by parsing a sequence of states and actions into phrases of variable order. The CTW algorithm is powerful as it performs model weighting over the class of bounded memory tree sources. Thus, the algorithm performs well not only on average, but also on individual sequences. A visualization of our proposed scheme in the context of a general closed-loop radar system can be seen in Figure \ref{fig:block}.\\

Our proposed algorithm efficiently estimates both the transition probabilities and memory length of an unknown target channel by applying a \emph{variable length} encoder. The sequence of target channel observations and radar waveforms selected is used to build a context-tree that describes the evolution of a $U^{\text{th}}$ order MDP, where $U$ is an unknown fixed integer. We show that the proposed algorithm estimates the transition probabilities in an asymptotically optimal manner, and thus reaches an optimal waveform selection policy for the general model class considered. We also demonstrate a notable finite-time improvement over a universal reinforcement learning algorithm based on the Lempel-Ziv compressor that was proposed for a general setting in \cite{Farias2010}. 

\subsection{Our Contributions}
This work develops an online waveform selection algorithm which remedies many of the aforementioned issues with traditional waveform selection schemes. Our approach frames the online waveform selection problem for target tracking as a partially observable Markov Decision Problem with an unknown, but finite, memory window. This general formulation allows the approach to scale to tracking scenarios where the waveform-specific CRLB cannot be calculated or analytical assumptions are violated such that numerical approximations to the CRLB are no longer meaningful. Further, the proposed approach does not involve approximating clutter or interference distributions directly. While radar waveform selection problems have been framed as MDPs in the past, our work provides the following notable improvements:\\
\begin{itemize}
	\item We develop a novel online waveform selection algorithm based on an adaptation of Context-Tree Weighting to the reinforcement learning problem. We show that the proposed algorithm converges to the Bellman-optimal policy for any finite-state target channel with finite memory using results from bounded parameter MDPs and the uniformly bounded parameter redundancy of CTW. Further, we demonstrate that this algorithm provides favorable finite-time performance compared to the active Lempel-Ziv algorithm proposed for a general setting in \cite{Farias2010}. This is due to the fact that our algorithm inherits the optimal worst-case average redundancy performance of the CTW algorithm. \\ 
	
	\item We demonstrate through a simulation study that this approach results in favorable performance when the scene adapts to the radars behavior. We further present a modification of the algorithm which ignores the arbitrary dependence on previous states. We show that this variation greatly reduces the number of sample paths in the model and results in faster convergence in practical scenarios where the scene's evolution does not depend on the radar's choice of waveform. \\
		
	\item We directly include feedback from the tracking system in our model. While previous works have optimized task-based metrics such as spectral occupancy and $\texttt{SINR}$ \cite{Thornton2020, PLiu2020}, or have relied on numerical approximations for radar performance \cite{LaScala2005}, our approach optimizes tracking performance using the tracking filter's innovation.
\end{itemize}

\subsection{Notation}
Calligraphic symbols ($\mathcal{A},\mathcal{S})$ denote sets, with $n$-letter extensions $(\ncalA^{n},\ncalS^{n})$. Capital letters ($X$,$Y$) denote random variables. Uppercase and lowercase bold symbols ($\mathbf{H},\mathbf{h})$ denote matrices and vectors, respectively. $\nbbR_{+}$ and $\nbbN_{+}$ denote the set of positive real and natural numbers. $\nbbR^{n}$ denotes $n$-dimensional Euclidean space. $\nbbE[\cdot]$ is the expectation of a random variable. To denote a sequence $\{\nbx_{1},\nbx_{2},...\nbx_{N}\}$, we use the shorthand $\nbx_{1}^{N}$, occasionally using $\nbx_{1}^{N}$ and $\nbx^{N}$ interchangeably. $\nbbP(\cdot)$ denotes a probability measure over a measurable space $(\Omega,\ncalF)$.\\

\section{Problem Formulation}
\subsection{System Model}
Consider a single-target moving in $p$-dimensional space. The target is being tracked by a waveform-agile radar sensor, which is stationary and located at the origin. The radar has access to a finite alphabet of waveforms, denoted by the set $\{\boldsymbol{\theta}_{i}\} = \Theta$. Let $\{ k \in \mathbb{N}_{+}: 1 \leq k \leq K \} = [K]$ denote the sequence of tracker update steps. During each step, a train of $N_{p}$ pulses is transmitted. Multi-pulse processing is used to ensure good resolution in both the range and Doppler directions \cite{Levanon2004}. The state of the target at step $k$ is denoted by $\mathbf{X}_{k} \in \mathbb{R}^{d}$, which is a vector containing $d$ parameters the radar wishes to estimate. Each step, the radar does not observe the true state, but instead receives a noisy measurement, denoted by $\mathbf{Z}_{k} \in \mathbb{R}^{d}$. \\

We consider both intrapulse (signal) processing and interpulse (information) processing in the tracking formulation. Each tracking step $k$, which consists of $N_{p}$ coherent pulses, the sensor transmits a pulse train given by the model \cite[p. 240]{VanTrees2001}
\begin{equation}
	s_{T}(t) = \sqrt{2} \textrm{Re} [ \sqrt{E_{T}} \tilde{s}(t) e^{j 2 \pi f_{c} t} ],
\end{equation}
where $t$ corresponds to continuous `fast time', $E_{T}$ is the energy of the transmitted signal, $f_{c}$ is the carrier frequency, and $\tilde{s}(t)$ is the complex envelope, which is assumed to be normalized $\int_{-\infty}^{\infty} |\tilde{s}(t)|^{2} dt = 1$ and is given by
\begin{equation}
	\tilde{s}(t) = \frac{1}{\sqrt{N_{p}}}\sum_{n=1}^{N_{p}-1} \tilde{s}_{n}(t- (n-1) T_{r}),
\end{equation}
where each $\tilde{s}_{n}$ is the complex envelope of an individual pulse and $T_{r}$ is the \emph{pulse repetition interval}. The measurement at each step then depends on the received signal\footnote{We study the case of a point target for simplicity. However, the waveform selection scheme presented here is also applicable to extended targets \cite[Ch. 12]{VanTrees2001}.}, which is given by the model 
\begin{equation}
	s_{R}(t) = \sqrt{2} \textrm{Re}\{ [ \sqrt{E_{R}} e^{j \phi} \tilde{s}(t-\tau_{0})e^{j \nu_{0}t} + \tilde{n}(t) ] e^{j 2 \pi f_{c} t}\},
\end{equation}
where $E_{R}$ is the energy of the received pulse, $\phi$ is a random phase shift, and $\tilde{n}(t)$ is a zero-mean complex noise process, often assumed to be white and Gaussian. In this work, we will avoid making any explicit assumptions about the noise, clutter or interference processes. The parameters $\tau_{0}$ and $\nu_{0}$ are the delay and Doppler shift, respectively. Our signal model assumes the maximum Doppler shift is small compared to the carrier frequency, which is true for most radar systems since the propagation velocity is approximately the speed of light. \\

To extract an estimate of the delay and Doppler values $\tau_{0}$ and $\nu_{0}$ from the received pulse train, a bank of matched filters is applied over a range of possible delay and Doppler values. The normalized response of this processor is given by the periodic ambiguity function
\begin{equation*}
	|A_{s}(\tau,\nu)| = \left| \frac{1}{N_{p}T_{r}} \int_{0}^{N T_{r}} \tilde{s}(t) \tilde{s}^{*}(t+\tau) \exp{(j2 \pi \nu t)} \; dt \right|,
\end{equation*}
which describes the fundamental resolution properties of the pulse train in the delay-Doppler space of interest. 

\subsection{Waveform Dependent Fisher Information (Traditional Solution)}
In this section, we describe the conventional strategy for waveform selection, which is based on Fisher information \cite{Kershaw1994,Kershaw1997,Sira2007,Sira2009} and discuss some practical limitations. Under the assumption of Gaussian measurement errors and a sufficiently high SNR, the AF of the transmitted waveform can be used to find the Fisher information matrix (FIM), which is given by
\begin{equation}
	\label{eq:fim}
	\mathbf{I}=\eta\left[\begin{array}{cc}
		\bar{f}^{2}-(\bar{f})^{2} & \overline{f t}-\bar{f} \bar{t} \\
		\overline{f t}-\bar{f} \bar{t} & \bar{t}^{2}-(\bar{t})^{2}
	\end{array}\right],
\end{equation}
where $\bar{f}$ and $\bar{t}$ are the mean frequency and time, respectively, and $\eta$ is the SNR. The elements of the information matrix are related to the second derivatives of the AF evaluated at the true target position \cite[p. 297-300]{VanTrees2001}. For differentiable waveforms, we can thus write the FIM in terms of the waveform parameters $\boldsymbol{\theta} \in \Theta$. However, closed form expressions can only be found for certain classes of waveforms, as is noted in \cite{Kershaw1994,Kershaw1997,Sira2009}. The FIM, and associated CRLB, can be further integrated into a tracking system and used as a criterion for waveform selection. However, this approach is dependent on a high SNR, and tracking is most often performed in low SNR scenarios. Further, in interference or clutter-limited scenarios, the assumption of a Gaussian noise model is often unfounded. To relax these assumptions, the remainder of this work investigates a more general formulation, in which a variable-order Markov model is used to select waveforms based on the history of radar observations.

\subsection{Target Motion and Measurement Model}
The interpulse target state and and observation dynamics can be generally formulated as follows
\begin{equation}
	\label{eq:generalMotion}
\begin{split}
	\mathbf{X}_{k+1} &= f_{k}(\mathbf{X}_{k},\mathbf{W}_{k}), \quad \text{and} \\
	\mathbf{Z}_{k} &= h_{k}(\mathbf{X}_{k},\mathbf{V}_{k}),
\end{split}
\end{equation}
where $f_{k}$ and $h_{k}$ are time-varying functions that may be non-linear, and $\mathbf{W}_{k}$ and $\mathbf{V}_{k}$ are the state and measurement noise processes, respectively. \emph{In this work, we do not make any specific assumptions regarding the statistics of the noise processes}. The goal of a tracker is to sequentially estimate the probability density function of the state using the observed measurements, $\mathbb{P}(\mathbf{X}_{k}|\mathbf{Z}_{1}^{k})$. Then the associated mean $\hat{\mathbf{X}}_{k} = \E[\mathbf{X}_{k}|\mathbf{Z}_{1}^{k}]$ can be used as an estimate of the target's state. \\

Notably, the quality of the recieved measurements depends on the resolution properties of the transmitted waveform, which is exactly makes adaptive waveform selection interesting. Let the vector $\boldsymbol{\theta}_{k} \in \Theta$ denote the choice of waveform shape and the associated parameters transmitted at step $k$. The tracker's pdf of interest is then $\mathbb{P}(\mathbf{X}_{k}|\mathbf{Z}_{1}^{k}, \boldsymbol{\theta}_{1}^{k})$ with associated mean $\hat{\mathbf{X}}_{k} = \E[\mathbf{X}_{k}|\mathbf{Z}_{1}^{k},\boldsymbol{\theta}_{1}^{k}]$. A reasonable goal for tracking is to minimize the mean square error. Thus, we can establish the following waveform-dependent cost function
\begin{equation}
	J(\boldsymbol{\theta}_{k}) = \E_{\mathbf{X}_{k}, \mathbf{Z}_{k} \mid \mathbf{Z}_{1: k-1}, \boldsymbol{\theta}_{1:k-1}}\left(\left(\mathbf{X}_{k}-\hat{\mathbf{X}}_{k}\right)^{\mathrm{T}} \Lambda\left(\mathbf{X}_{k}-\hat{\mathbf{X}}_{k}\right)\right),
	\label{eq:idealcost}
\end{equation}
where the expectation is taken over possible states and measurements, and $\Lambda$ is a scaling matrix to ensure units remain consistent. To evaluate (\ref{eq:idealcost}), the waveform-specific measurement noise covariance matrix $N(\boldsymbol{\theta_{k}})$ must be computed for every possible waveform configuration. In the high SNR regime, it can be assumed that the radar reliably detects the target and reaches the CRLB. Thus $N(\boldsymbol{\theta}_{k})$ can be set using the CRLB as
\begin{equation*}
	N(\boldsymbol{\theta}_{k}) = \mathbf{T} \mathbf{I}^{-1}(\boldsymbol{\theta}_{k}) \mathbf{T}^{T},
\end{equation*}
where $\mathbf{I}(\boldsymbol{\theta})$ is the Fisher information matrix for the receiver and $\mathbf{T}$ is a transformation matrix between the receiver estimation parameters and the tracking subsystem measurement vector components.\\

In general, (\ref{eq:idealcost}) cannot be found in closed form, especially if the impact of a particular waveform choice on detection is unknown. If transmitting a particular waveform causes the radar to miss a detection, the resulting measurement will likely have a much larger tracking error than predicted by (\ref{eq:idealcost}). Previous works have relied on analytical approximations and sequential Monte-Carlo methods to approximate the one-step predicted MSE cost for every possible choice of waveform \cite{Sira2007,Sira2009,Bell2015}. However, both of these approaches come with clear disadvantages. For example, \cite{Bell2015} uses the predicted conditional CRLB to bound the MSE matrix by conditioning on received data. This approach relies on approximations in many cases, with Gaussian distributed errors being a notable exception. Additionally, these approaches are generally limited to myopic optimization due to computational concerns, where a multi-step optimization is expected to be more effective for many tracking applications. 

\subsection{Traditional Waveform Selection}
If we constrain the formulation of (\ref{eq:generalMotion}) and instead consider the discrete-time linear target and measurement models
\begin{align*}
	\mathbf{X}_{k+1} &= \mathbf{F} \mathbf{X}_{k} + \mathbf{G} \mathbf{W}_{k} \\
	\mathbf{Z}_{k}   &= \mathbf{H} \mathbf{X}_{k} + \mathbf{V}_{k},
\end{align*}
where $\mathbf{W}_{k}$ and $\mathbf{V}_{k}$ are zero-mean, white Gaussian noise processes, which are both zero-mean with respective covariance matrices $\mathbf{Q}_{k}$ and $\mathbf{N}_{k}$, then a closed-form expression for the waveform-dependent measurement noise covariance matrix $\mathbf{N}(\boldsymbol{\theta_{k}})$ can be found using the FIM $\mathbf{I}(\boldsymbol{\theta_{k}})$ written in (\ref{eq:fim}). However, as discussed previously, we wish to avoid any particular assumptions regarding the statistics of the target channel. Thus, instead of solving for $\mathbf{N}(\boldsymbol{\theta_{k}})$ in order to directly estimate the impact of each parameter vector $\boldsymbol{\theta_{k}} \in \Theta$ on tracking performance, we more generally describe the radar waveform selection process as an optimal control problem in a finite state target channel. Considerations related to the evolution of the target channel and the radar's measurement processes are now described.

\section{Universal Reinforcement Learning Problem}
Consider a radar system located at the origin. We assume the radar is stationary, and continually surveys a scene which contains at most one moving target of interest. The scene may contain other scatterers (clutter) as well as interfering signals. The scene can be represented as a grid in the delay-Doppler domain, where the range of quantized delay cells is indexed by $\tau = \{1,...,M\}$ and the collection of Doppler cells is indexed by $\nu = \{1,...,N\}$. Let the space of hypotheses about the target's position be denoted by set $\mathcal{H}$, where $|\mathcal{H}| = (MN)+1$, with the additional hypothesis corresponding to no target present in the scene. \\

During each tracking step, the radar makes an observation $y_{k}$ from a finite alphabet $\ncalY$, must select a waveform $\mathbf{w}_{k}$ from finite alphabet $\mathcal{W}$, and receives a bounded cost $g(\nby_{k},\nbw_{k},\nby_{k+1}) \in [-g_{max},g_{max}]$. The radar wishes to minimize the long-term average cost given by 
\begin{equation}
	\limsup_{K \rightarrow \infty} \E \left[ \frac{1}{K} \sum_{k=1}^{K} g(\nby_{k},\nbw_{k},\nby_{k+1})  \right],
\end{equation} 
where the difficulty comes from the apparent randomness in the observation process, and the lack of prior knowledge regarding the cost function. We now describe the Markov kernel which governs the measurement generating process.\\

The target's state, which corresponds to its true position, is denoted by $\mathbf{x}_{k} = i$, where the index $k \in \mathbb{N}_{+}$ corresponds to the discrete tracking step, and $i \in \mathcal{H}$ is a hypothesis regarding the target's location in the delay-Doppler space of interest. The target's state evolves according to a finite-memory stochastic process with stationary transition probabilities 
\begin{equation}
	P(\nbx_{k+1}|\mathbf{x}_{k-L+1}^{k}) = \mathbb{P} \left( \mathbf{x}_{k+1} = i | \mathbf{x}_{1}^{k} \right) \quad \forall i \in \mathcal{H}, \mathbf{x}_{k-L+1}^{k} \in \mathcal{H}^{L}
\end{equation}
where $P$ is a stochastic transition kernel and $L \in [0, \infty]$ is the memory length of the motion process, and the distribution exists for all possible values of the past $L$ target states $\mathbf{x}_{k-L+1}^{k}$. Both the target's state transition probabilities and the process memory length are assumed to be unknown to the radar \emph{a priori}. The radar's goal is to determine the target's state at each time step with minimal uncertainty. \\

In addition to the target's state, the scene is also characterized by the state of the \emph{target channel}, which corresponds to the relative degree of propagation losses due to interference and noise in the channel \cite{Bell1988}. The target channel's state is denoted by $\mathbf{c}_{k} \in \mathcal{C}$, where $|\mathcal{C}| < \infty$. Similar to the target's state, the state of the target channel also evolves according to a finite-memory stochastic process, given by 
\begin{equation}
	P( \mathbf{c}_{k+1} = i | \mathbf{c}_{k-J+1}^{k}, \mathbf{w}_{k-J+1}^{k}), \quad \forall i \in \mathcal{C}, \mathbf{w}_{k-J+1}^{k} \in \mathcal{W}^{J}, \mathbf{c}_{k-J+1}^{k} \in \mathcal{C}^{J}
\end{equation}
where $\mathbf{w}_{k-J+1}^{k}$ are the past $J$ transmitted waveforms, each selected from a finite alphabet, $|\mathcal{W}| < \infty$. Design choices for offline construction of this library will depend on several application specific considerations, and are described in Section \ref{se:library}. The fixed constant $J \in [0,\infty]$ is the memory length of the process.\\ 

It is important to note the general dependence of the previous $J$ transmitted waveforms on the evolution of future target channel states. Although many real-world interferers will not respond to the radar's choice of waveform, this model is general enough to consider a reactive interfering system, which may co-operate or compete with the radar for channel resources. This dependence can also be relaxed to simplify the problem considerably.\\

 The true state of the radar's scene can then be viewed as the composition $\mathbf{s}_{k} = [\mathbf{x}_{k}, \mathbf{c}_{k}]$ which takes values in set $\mathcal{S}$, which has cardinality $|\mathcal{H}| \times |\mathcal{C}|$. The scene transition probabilities can then be similarly expressed as 
\begin{equation}
	 \label{eq:stateTrans}
	 P(\mathbf{s}_{k+1}| \mathbf{s}_{k-U+1}^{k}, \mathbf{w}_{k-U+1}^{k}),
\end{equation}
where $U = \max \{J,L\}$ is the memory length of the state generating process.\\

In general, it cannot be assumed that the radar directly observes the composite state $\mathbf{s}_{k}$. Instead, each tracking step $k \in \mathbb{N}_{+}$, the radar receives a noisy measurement $\mathbf{y}_{k} = j$ from set $\mathcal{Y}$, where $|\mathcal{Y}| = 2^{NM} \times |\mathcal{C}|$. The probability of receiving a particular measurement $\mathbb{P}(\mathbf{y}_{k} = j)$ will depend on the most recent waveform transmitted by the radar and the state of the scene, given by $\mathbf{s}_{k} \in \mathcal{S}$. The probability of receiving measurement $\nby_{k} = j$ is then given by the following measurement model
\begin{equation}
	\mathbb{P} \left( \nby_{k+1}=j|\nbs_{k+1}=i, \nbw_{k} = h \right) \quad \forall j \in \mathcal{Y}, i \in \mathcal{S}, h \in \mathcal{W},
\end{equation}
which reflects uncertainty in the radar's observation of the target's state and the channel state due to estimation errors.\\ 

Since the radar does not directly observe $\nbs_{k}$, it must make a decision regarding $\nbw_{k}$ using only the information sequentially observed up until time $k$, given by the set
\begin{equation} 
\mathcal{F}_{k} = \{\nby_{1},\nbw_{1},g_{1},\nby_{2},...,\nbw_{k-1},g_{k-1},\nby_{k}\}, 
\end{equation}
which is the sequence of nested sub $\sigma$-algebras generated by the observations, costs and actions until time $k$. This set can equivalently be represented as the smallest $\sigma$-algebra generated by a sequence of random variables, ie. $\sigma(F_{1}^{k})$ where the random variables are defined on a common measurable space ($\Omega,\ncalF)$ and $\ncalF_{1} \subseteq \ncalF_{2} \subseteq ...\subseteq \ncalF$. This set is often referred to as the \emph{information state} or filtration \cite[Section 2.2]{Lattimore2020}. It is assumed that the radar can store the entire information state in memory to enable knowledge gain. \\

Since $\mathcal{F}_{k}$ contains all relevant information gathered until decision step $k$, it can be used to select waveforms in place of the true state sequence $\{\nbs_{1},...,\nbs_{k}\}$. The main difficulty with using the information state to select waveforms is that the dimension of $\mathcal{F}_{k}$ grows linearly with $k$. To keep the problem tractable, the radar can utilize a sufficient statistic for $\mathcal{F}_{k}$. In this work, we make the following assumption: \\

\begin{assum}
	Conditioned on the current information state, the noisy measurement $\nby_{k}$ is an unbiased estimate of the true state of the scene $\nbs_{k}$, ie. $\E[\nby_{k+1}|\mathcal{F}_{k}] = \E[\nbs_{k+1}|\mathcal{F}_{k}]$. Thus $\nby_{k}$ is a sufficient statistic for $\nbs_{k}$.
\end{assum}

\begin{remark}
	The above assumption is reasonable when the radar has already detected the target from an earlier scanning period and obtains a measurement at each time step. In this case, both the target and spectrum observations can viewed as unbiased estimates of their true values.
\end{remark}

To calculate the measurement probabilities in closed form, a model-based approach can be used to calculate the probability of detection and false alarm in each cell, as in \cite{LaScala2005}. However, we will not make any particular assumptions about the structure of the target scene or waveform catalog, and take the view that these probabilities must be \emph{learned} over time through repeated experience. \\

The sequence of transmitted waveforms in response to the observed sequence of measurements can be interpreted as a policy, or decision function.
\begin{Def}[Policy]
	A policy $\mu$ is a sequence of mappings $\{\mu_{k}\}$, from which at each time $k$, the map $\mu_{k}: \mathcal{Y}^{k} \times \mathcal{W}^{k-1} \mapsto \mathcal{W}$ determines the waveform transmitted at time $k$ given the history of measurements and previously transmitted waveforms up to time $k$ from the information state $\mathcal{F}_{k}$.
\end{Def}

After each waveform decision, the radar receives a bounded cost $g(\nby_{k},\nbw_{k},\nby_{k+1}) \in [-g_{max},g_{max}]$, which quantifies the effect of $\nbw_{k}$ on the uncertainty about the target's position\footnote{Considerations related to cost function design will be discussed in Section \ref{se:cost}.}. To evaluate the performance of policy $\mu$, we define the long-term average cost while selecting waveforms according to $\mu$ as
\begin{equation}
	\lambda_{\mu} \triangleq \lim_{K \rightarrow \infty} \E_{\mu} \left[ \frac{1}{K} \sum_{k=1}^{K}g(\nby_{k},\nbw_{k},\nby_{k+1}) \right].
\end{equation}

Since the underlying state space is assumed to be finite, the above limit always exists \cite{Bertsekas2006}. We can then define the optimal average cost over stationary policies by
\begin{equation}
	\lambda^{*} \triangleq \inf_{\xi} \limsup_{K \rightarrow \infty} \E_{\xi} \left[ \frac{1}{K} \sum_{k=1}^{K}g(\nby_{k},\nbw_{k},\nby_{k+1}) \right],
	\label{eq:avCost}
\end{equation}
where the infimum is taken over the set of all admissible policies. The radar can then aim to find a policy $\mu$ that attains $\lambda^{*}$, provided that an appropriate cost function, which allows the radar to localize the target in the delay-Doppler grid of interest, is selected. In this formulation, we have made the following assumption.\\

\begin{assum}
	The optimal average cost is independent of the initial state. More precisely
	\begin{equation*}
		\lambda^{*}(\nby^{U},\nbw^{U-1}) = \lambda^{*}, \quad \forall (\nby^{U},\nbw^{U-1}) \in \ncalY^{U} \times \ncalW^{U-1},
	\end{equation*}
	which is a relatively benign assumption that holds for any problem satisfying a `weak accessibility' condition \cite{Bertsekas2006}.
	\\
\end{assum}

If the order of the state generating process $U$ is known, as well as each of the transition probabilities, then an average-cost optimal stationary policy can be found\footnote{While the optimal policy can be found via dynamic programming techniques, for large problem spaces this is often not possible in finite time and approximate solution methods are necessary.} by solving the discounted Bellman equation
\begin{equation}
	J(\nby^{U},\nbw^{U-1}) = \min_{w_{U}} \sum_{w_{U+1}} \nbbP(\nby_{U+1}|\nby^{U},\nbw^{U}) \times [g(\nby_{U},\nbw_{U},\nby_{U+1}) + \gamma J(\nby_{2}^{U+1},\nbw_{2}^{U})],
	\label{eq:Bellman}
\end{equation}
for all pairs $(\nby^{U},\nbw^{U-1}) \in \ncalY^{U} \times \ncalW^{U-1}$. If the discount factor $\gamma$ is sufficiently close to 1, a solution to (\ref{eq:Bellman}) can be used to find an optimal stationary policy for the average-cost problem posed in (\ref{eq:avCost}).

\begin{prop}[Structure of the optimal policy, Theorem 1 \cite{Yang2005}]
	For a feedback-dependent state generating process of the form $\mathbb{P}(\nbs_{k+1} = i| \nbs_{k-U}^{k},\nbw_{k-U}^{k})$, an optimal policy $\mu^{*}$, which achieves the optimal long-term average cost $\lambda^{*}$ will be a Markov process of order $U$, eg. a function of the form $\mu_{k}(\nbw_{k}| \nbs_{k-L}^{k}, \nbw_{k-L}^{k})$ where each $\mu_{k}$ is a probability distribution over the set $\ncalW$.
\end{prop}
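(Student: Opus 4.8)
The plan is to recognize that, although the scene kernel (\ref{eq:stateTrans}) conditions on a window of the past $U$ states and waveforms, this dependence has \emph{fixed, finite} extent and can therefore be folded into a single augmented state; the $U$th-order problem then becomes an ordinary first-order average-cost MDP on a finite space, for which the existence of a stationary optimal policy is classical. Concretely, I would define the lifted state
\begin{equation*}
	\tilde{\nbs}_k \triangleq (\nbs_{k-U+1}^{k},\, \nbw_{k-U+1}^{k-1}),
\end{equation*}
which lives in the finite set $\tilde{\ncalS} = \ncalS^{U} \times \ncalW^{U-1}$ since $|\ncalS|,|\ncalW| < \infty$. Under any action $\nbw_k$ the successor $\tilde{\nbs}_{k+1}$ is produced by a deterministic shift of this window together with the freshly drawn $\nbs_{k+1}$, whose conditional law depends only on $(\tilde{\nbs}_k,\nbw_k)$ by (\ref{eq:stateTrans}); hence $\{\tilde{\nbs}_k\}$ is a time-homogeneous, first-order controlled Markov chain.

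Next I would transfer the objective: the stage cost $g$ involves only consecutive states and the current waveform, so it is a function of $(\tilde{\nbs}_k,\nbw_k,\tilde{\nbs}_{k+1})$, and the long-run average cost $\lambda^{*}$ of (\ref{eq:avCost}) is left unchanged when evaluated on the lifted chain. After checking that Assumption 2 (independence of $\lambda^{*}$ from the initial condition, guaranteed by weak accessibility) carries over to $\tilde{\ncalS}$, the average-cost optimality equation --- of the same form as (\ref{eq:Bellman}) --- admits a solution whose minimizing action defines a \emph{stationary} map $\tilde{\mu}: \tilde{\ncalS} \mapsto \ncalW$ that attains $\lambda^{*}$; this is the standard existence theorem for finite-state, finite-action average-cost MDPs \cite{Bertsekas2006}. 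Unwinding the definition of $\tilde{\nbs}_k$, the map $\tilde{\mu}$ is exactly a function of $(\nbs_{k-U+1}^{k},\nbw_{k-U+1}^{k-1})$, i.e. a Markov policy of order $U$, and since a \emph{deterministic} optimizer exists the randomized form $\mu_k(\nbw_k \mid \cdot)$ in the statement may be taken to be a point mass. When the true state is only partially observed I would run the identical argument on the measurement process, invoking Assumption 1 so that $\nby_k$ serves as a sufficient statistic for $\nbs_k$; this yields an order-$U$ policy in the observations, consistent with the form of (\ref{eq:Bellman}).

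I expect the genuine obstacle to lie not in the algebra of the lifting but in verifying that the regularity hypothesis of the existence theorem survives augmentation. Weak accessibility (equivalently the communicating/unichain property) of the \emph{lifted} chain is not inherited automatically from that of the original process: because the window evolves by a deterministic shift, reaching a target window $\tilde{\nbs}$ forces one to realize each of its coordinates in the correct order over $U$ steps, so the claim must be traced back to stationarity and reachability of (\ref{eq:stateTrans}). A minor secondary point is reconciling the indexing in the statement --- the kernel conditions on a window of length $U$ while the displayed policy is written with index $L$ --- which I would resolve by fixing the policy memory at $U = \max\{J,L\}$ and observing that any shorter sufficient memory is merely a degenerate special case.
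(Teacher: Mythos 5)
Your proposal is correct, but there is nothing in the paper to compare it against: the paper does not prove this proposition at all. It is imported verbatim as Theorem~1 of the cited reference \cite{Yang2005}, which is why it is stated with the citation in its header and followed immediately by discussion rather than a proof environment. Your state-augmentation argument --- lifting to $\tilde{\nbs}_k = (\nbs_{k-U+1}^{k},\nbw_{k-U+1}^{k-1})$, observing that the shift-plus-innovation dynamics make $\{\tilde{\nbs}_k\}$ a first-order finite controlled chain, and invoking the classical existence of stationary deterministic optimal policies for finite average-cost MDPs --- is precisely the standard route by which results of this type (including the cited one) are established, so your proof is a legitimate self-contained substitute for the citation. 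Two remarks. First, the ``genuine obstacle'' you flag, that weak accessibility of the lifted chain is not inherited automatically, is in this paper dissolved by fiat: Assumption~2 is already phrased on length-$U$ histories, $\lambda^{*}(\nby^{U},\nbw^{U-1}) = \lambda^{*}$ for all $(\nby^{U},\nbw^{U-1}) \in \ncalY^{U}\times\ncalW^{U-1}$, i.e.\ it is an assumption directly on the lifted state space, so no inheritance argument is needed. Second, your reduction of the partially observed case to the fully observed one via Assumption~1 is the right move in the context of this paper (it is what licenses writing the Bellman equation (\ref{eq:Bellman}) in terms of observations), but be aware that the sufficiency claim in Assumption~1 is itself doing heavy lifting --- an unbiasedness condition $\E[\nby_{k+1}|\mathcal{F}_{k}] = \E[\nbs_{k+1}|\mathcal{F}_{k}]$ does not in general make $\nby_k$ a sufficient statistic, so your argument is conditional on accepting that assumption as stated, exactly as the paper's is. Your reading of the $L$ versus $U$ index mismatch in the displayed policy as a typo, resolved by taking the memory to be $U=\max\{J,L\}$, is also the sensible one.
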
  

However, to find a policy corresponding to the notion of Bellman optimality seen in (\ref{eq:Bellman}), both the model order and transition probabilities are known. In a radar tracking problem, it cannot be assumed that the environment's dynamics are known \emph{a priori}. Thus, to learn an optimal waveform selection scheme, we must estimate both the model order of the transition kernel and transition probabilities themselves online in an efficient manner. Fortunately, estimation of an unknown order Markov process has been well studied in the field of universal source coding\footnote{An expository summary of universal data compression can be found in \cite[Chapter 13]{Cover2006}.}. In the next section, we review some of these tools and develop a waveform selection algorithm that inherits asymptotic Bellman optimality from the properties of known algorithms.

\section{Universal Waveform Selection Approach}
To sequentially learn an optimal policy for waveform selection, the evolution of the underlying dynamic environment must be estimated, while simultaneously controlling the radar system online. This presents a challenging problem, since the environment is assumed to have an unknown dependence on the past, and may also respond adversarially to the radar's choice of waveform, as seen by the dependence on $\nbw$ in (\ref{eq:stateTrans}). Further, the model could quickly become intractable for environments with many possible trajectories of the state-action sample path. Finally, the radar must strike a balance between selecting waveforms that improve the quality of the learned model (\textit{exploration}) and waveforms which minimize the incurred costs (\textit{exploitation}).\\

To establish Bellman optimality, both the transition kernel $P$ and cost-to-go function $J$ must be known, which can be used to select actions in a manner similar to the value iteration approach \cite[Sec. 1.3]{Bertsekas2006}. Since we focus on computing these values online, our analysis rests on the accuracy of our algorithms' estimates $\hat{P}$ and $\hat{J}$, similar to that of \cite{Kearns2002}, which proves a ``Simulation Lemma" that states conditions for sufficient estimation of $P$ and $J$ to select optimal actions. However, our problem requires slightly more care, as the model order is also being estimated online. In general, the rate at which the transition probability estimates converge to their true values will limit the performance of the algorithm. Thus, the empirical performance for a given radar scene will depend on the specific state transition structure.\\

In order to efficiently represent our model while simultaneously learning the model order and transition behavior, we choose to represent the $U^{\text{th}}$ order Markov environment using a context-tree data structure, which is a convenient way to represent an arbitrary process with memory. 

\begin{Def}[Tree Source]
	A tree source generates a sequence $\{\nby_{i},\nbw_{i}\}_{i=1}^{\infty}$ of symbols in a finite alphabet $\ncalD = \ncalY \times \ncalW$. The statistical behavior of a \emph{finite memory} tree source can be described by a suffix set $\ncalA$, which is a collection of strings on $\ncalD$ assumed to be proper and complete. Properness means that no string in $\ncalA$ is a suffix of any other string. Completeness denotes that each potential string has a unique suffix in $\ncalA$.
\end{Def}

\begin{Def}[Context Tree]
	A context tree is a set $\ncalA$ of suffixes, or \emph{nodes}, that together form a complete tree on the source alphabet $\ncalD = \ncalY \times \ncalW$. A node at depth $\ell$ will have the form $(\nby^{\ell},\nbw^{\ell-1})$. We assume that the depth of the tree is bounded by a constant $D \in \nbbN_{+}$. To each node $\nba_{i} \in \ncalA$, a probability estimate $\hat{P}$ and estimated cost-to-go value $\hat{J}$ is assigned. Alternatively, a context tree is a perfect tree of depth $D$.
\end{Def}

\begin{assum}
	The maximum depth of the context tree, $D$ is longer than the memory length of the radar scene's measurement generating process $U$. This assumption can be relaxed via an extension of context-tree weighting to infinite trees \cite{Willems1998}, but we leave it for simplicity and due to the fact that temporal correlation in a realistic radar scene will be finite.
\end{assum}

The idea of building such a tree-source to efficiently represent the stochastic behavior of a dynamic system was proposed for the general reinforcement learning problem in \cite{Farias2010}, using an active variant of the Lempel-Ziv algorithm. In this interpretation, each node of the tree source corresponds to strings of radar observations and waveforms $(\nby^{\ell},\nbw^{\ell-1})$, along with an estimated transition probability and estimated cost-to-go. Thus, the key components of a solution are the estimates $\hat{P}$ and $\hat{J}$.\\

In our approach, we apply the general Lempel-Ziv inspired framework, using the more practical context-tree weighting method of Willems \emph{et al.} \cite{Willems1995}, which improves the rate at which the transition probabilities are estimated. The context tree weighting method is particularly useful as it achieves the optiumum worst-case average redundancy for any stationary and ergodic tree source \cite{Willems1995}. While performance guarantees for LZ algorithms are asymptotic and in the average sense, the CTW algorithm performs well on any sequence in a given model class. The details of the CTW algorithm used for waveform selection will now be discussed, and can be seen in Algorithm \ref{algo:arlz}. \\

\begin{algorithm}[t]
	\setlength{\textfloatsep}{0pt}
	\label{algo:arlz}
	\caption{Adaptive Radar via Context-Tree Weighting}
	\SetAlgoLined
	\textbf{Input} discount factor $\gamma$ and a sequence of exploration rates $\{\alpha_{k}\}$\\
	Set $c \leftarrow 1, \tau_{c} \leftarrow 1$, $N(\cdot) \leftarrow 0 \; \; \textit{(context counts)}, \hat{J}(\cdot) \leftarrow 0$, $\hat{P}(\cdot) \leftarrow 1 / \lvert \mathcal{Y} \rvert$  \\
	\For{\text{Each CPI}}{
		\vspace{0.07cm}
		Radar observes $\nby_{k} \in \mathcal{Y}$; \textit{(Most recent measurement)}\\
		\eIf{$N(\nby_{\tau_{c}}^{k},\nbw_{\tau_{c}}^{k-1}) > 0$ (\textit{known context})}{ 
			Select a random waveform $\nbw_{k} \in \mathcal{W}$ with probability $\alpha_{k}$; (\textit{Exploration})\\
			OR select greedy waveform with respect to $\hat{J}$ with probability $1-\alpha_{k}$; (\textit{Exploitation})	
		}{
			Select $\nbw_{k}$ uniformly from $\mathcal{W}$; \textit{(Exploration in an unknown context)}\\
			\For{$u = k:-1:\tau_{c}$ (\textit{Traverse backwards and perform updates}) }{ 
				Increment $N(\nby_{\tau_{c}}^{u}, \nbw_{\tau_{c}}^{u}) \leftarrow N(\nby_{\tau_{c}}^{u}, \nbw_{\tau_{c}}^{u}) + 1$; (\textit{Update node count or add node to tree}) \vspace{0.3cm}
				
				For each $\nby_{u} \in \mathcal{Y}$ update $\hat{P}(\nby_{u} | \nby_{\tau_{c}}^{s-1}, \nbw_{\tau_{c}}^{s-1}) \leftarrow \frac{\textstyle N( (\nby_{\tau_{c}}^{u-1},\nby_{u}), \nbw_{\tau_{c}}^{u-1}) + 1 / 2}{\textstyle \sum_{y'} N((\nby_{\tau_{c}}^{u-1},\nby'),\nbw^{u-1}_{\tau_{c}}) + \lvert \mathcal{Y} \rvert}$; (\textit{KT Update})\vspace{0.3cm}
				
				If node is not a leaf node, weigh $\hat{P}(\nby_{u}|\nby_{\tau_{c}}^{s-1}, \nbw_{\tau_{c}}^{s-1})$ by its children. Otherwise, use KT estimate above. (\textit{Context Tree Weighting (\ref{eq:ctw})})\vspace{0.3cm}
				
				Update cost-to-go $\hat{J}(\nby_{\tau_{c}}^{u}, \nbw_{\tau_{c}}^{u-1}) \leftarrow \underset{\nbw_{u}}{\min} \textstyle \sum_{\nby_{u+1}} \mathbb{P}(\nby_{u+1}|\nby_{\tau_{c}}^{u}, \nbw_{\tau_{c}}^{u-1}) \times [g(\nby_{u},\nbw_{u},\nby_{u+1}) + \gamma J(\nby_{\tau_{c}}^{u}, \nbw_{\tau_{c}}^{u-1})];$
				
			}
			$b \leftarrow b+1$, $\tau_{c} \leftarrow \tau_{c}+1$; (\textit{Begin the next phrase})
		}
		Radar receives cost $g(\nby_{k},\nbw_{k},\nby_{k+1})$;
	}
\end{algorithm}

To build the context tree, the algorithm splits time into \emph{phrases} of variable length. Each phrase consists of a sequence of observations and actions, and corresponds to a node of the context-tree. Let the $c^{\text{th}}$ phrase cover the interval $\tau_{c} \leq k \leq  \tau_{c+\ell}-1$. The associated sequence of measurements and waveforms which characterizes the phrase will then be $(\nby_{\tau_{c}}^{\tau_{c+\ell}}, \nbw_{\tau_{c}}^{\tau_{c+\ell}-1})$, which is used to estimate transition probabilities and cost-to-go values. The sequence $(\nby_{\tau_{c}}^{\tau_{c+\ell}}, \nbw_{\tau_{c}}^{\tau_{c+\ell}-1})$ along with the associated transition probability and cost-to-go estimates, corresponds to a node of the context tree. \\

For each pair $\nby_{\ell+1} \in \mathcal{Y}$ and $\nbw_{\ell} \in \mathcal{W}$, the algorithm maintains an estimate of the transition behavior $\hat{P}(\nby_{\ell+1}|\nby^{\ell},\nbw^{\ell})$, which is the probability of observing a particular measurement of the radar scene $\nby_{\ell+1}$ at the next time step when waveform $\nbw_{\ell}$ is selected, given the current context $(\nby^{\ell},\nbw^{\ell-1})$. The transition probabilities are initialized to a uniform distribution over the measurement space $\mathcal{Y}$ and updated using the observed counts of particular contexts. Let $N(\nby^{\ell+1},\nbw^{\ell})$ be the number of times the context $(\nby^{\ell},\nbw^{\ell})$ has been visited before step $k$. Then the transition probability can be estimated using the Krischevsky-Trofimov estimator
\begin{equation}
	\label{eq:kt}
	P_{\texttt{KT}}(\nby_{\ell+1}| \nby^{\ell},\nbw^{\ell}) = \frac{N(\nby^{\ell+1},\nbw^{\ell})+1/2}{\sum_{\nby' \in \mathcal{Y}} N((\nby^{\ell},\nby'),\nbw^{\ell})+|\mathcal{Y}|/2},
\end{equation}
which can be computed sequentially using the observed frequency of each context at the current tree level. The KT estimator can be viewed from a Bayesian perspective as a Dirichlet-Multinomial conjugate pair with an uninformative prior. The KT estimator has several important properties, including the following Lemma, for which a proof is available in \cite{Merhav1998}.

\begin{lemma}[Redundancy Performance of KT Estimator]
	\label{thm:kt}
	Given an arbitrary sequence $\{y_{t}\},$ with each $y_{t}$ in a finite alphabet $\ncalY$, consider the problem of making sequential probability assignments $\hat{P}(\cdot)$ over $\ncalY$, so as to minimize the compression loss, or \emph{self-information}, $\sum_{t=1}^{T} -\log \hat{P}_{t-1}(y_{t})$ for a finite time horizon $T$. Then, the KT estimator (\ref{eq:kt}) \textbf{uniformly} achieves the bound
	\begin{equation}
		-\sum_{t=1}^{T} \log \hat{P}_{\texttt{KT}}(y_{t}) - \min_{q \in \ncalM(\ncalY)} \left[ -\sum_{t=1}^{T} \log q(y_{t}) \right] \leq \frac{\ncalY}{2} \log T + O(1),
	\end{equation}
	where the minimization is taken over $\ncalM(\ncalY)$, the set of all probability distributions on $\ncalY$. As $T \rightarrow \infty$, the KT estimator performs close to the best probability assignment while retaining the simple update (\ref{eq:kt}).
\end{lemma}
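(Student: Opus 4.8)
The plan is to recognize the left-hand side as a pointwise (individual-sequence) redundancy and to bound it through the Bayesian-mixture interpretation of the KT estimator already noted after (\ref{eq:kt}). First I would observe that the minimizer over $q \in \ncalM(\ncalY)$ is the empirical (maximum-likelihood) distribution $\hat{q}(a) = n_a / T$, where $n_a$ denotes the number of occurrences of symbol $a$ in $y_1^T$; hence the second term equals the ML codelength $-\sum_{a \in \ncalY} n_a \log(n_a/T)$. Next, since the KT estimator is the Dirichlet$(1/2,\dots,1/2)$--Multinomial conjugate predictor, its sequential assignments telescope into a single block probability
\begin{equation}
	P_{\texttt{KT}}(y_1^T) = \prod_{t=1}^{T} \hat{P}_{\texttt{KT}}(y_t \mid y_1^{t-1}) = \int_{\Delta} \prod_{a \in \ncalY} \theta_a^{n_a} \, w(\theta)\, d\theta,
\end{equation}
where $\Delta$ is the probability simplex on $\ncalY$ and $w$ is the Jeffreys prior. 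Consequently the entire left-hand side collapses to the single log-ratio $\log\big( P_{\texttt{ML}}(y_1^T) / P_{\texttt{KT}}(y_1^T) \big)$, with $P_{\texttt{ML}}(y_1^T) = \prod_{a}(n_a/T)^{n_a}$.

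Second, I would evaluate the Dirichlet integral in closed form, obtaining
\begin{equation}
	P_{\texttt{KT}}(y_1^T) = \frac{\Gamma(|\ncalY|/2)}{\Gamma(1/2)^{|\ncalY|}} \cdot \frac{\prod_{a \in \ncalY} \Gamma(n_a + 1/2)}{\Gamma(T + |\ncalY|/2)},
\end{equation}
so that the redundancy becomes a ratio of Gamma functions. The core estimate is then purely analytic: applying Stirling's formula to each factor, the per-symbol contribution $\log\big( (n_a/T)^{n_a} \, \Gamma(1/2) / \Gamma(n_a+1/2) \big)$ together with the normalizing denominator is bounded by $\tfrac{1}{2}\log n_a + O(1)$, uniformly in the counts (including the degenerate cases $n_a = 0$, which contribute only $O(1)$).

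Finally, summing the per-symbol bounds over the at most $|\ncalY|$ symbols and using $n_a \le T$ yields
\begin{equation}
	-\sum_{t=1}^{T}\log \hat{P}_{\texttt{KT}}(y_t) - \min_{q \in \ncalM(\ncalY)}\Big[-\sum_{t=1}^{T}\log q(y_t)\Big] \le \frac{|\ncalY|}{2}\log T + O(1),
\end{equation}
which is exactly the claimed bound. The uniformity assertion follows because every step above is carried out pointwise for a fixed realization $y_1^T$: no expectation over the source is ever taken, so the inequality holds simultaneously for all individual sequences. I expect the main obstacle to be the Stirling step, namely controlling the Gamma ratios $\Gamma(n_a+1/2)/\Gamma(n_a+1)$ and the denominator $\Gamma(T+|\ncalY|/2)/\Gamma(T+1)$ with explicit, $T$-independent constants, so that the $O(1)$ remainder is genuinely bounded across all count vectors rather than merely asymptotically.
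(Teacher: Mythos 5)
Your proposal is correct, and it is worth noting that the paper itself does not prove this lemma at all: it simply defers to the universal-prediction literature (the citation to Merhav and Feder), so your argument is a self-contained reconstruction of what that reference contains. Your route is the standard one for Krichevsky--Trofimov redundancy: identify the comparator $\min_{q}$ with the empirical (ML) distribution via nonnegativity of relative entropy, telescope the sequential KT assignments into the Dirichlet$(1/2,\dots,1/2)$--multinomial block probability, evaluate the Dirichlet integral as a ratio of Gamma functions, and control that ratio with Stirling's formula uniformly over count vectors. All of these steps are sound, and your concern about the Stirling step is the right one but is easily discharged: Stirling bounds with explicit additive constants hold for all arguments bounded away from zero, the degenerate counts $n_a=0$ contribute only bounded terms, and the bookkeeping in fact yields the sharper constant $\tfrac{|\ncalY|-1}{2}\log T + O(1)$, which implies the stated bound (the statement's ``$\tfrac{\ncalY}{2}$'' should read $\tfrac{|\ncalY|}{2}$ in any case). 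Your closing observation about uniformity is also exactly right and is the substantive content of the word \textbf{uniformly} in the lemma: because the argument is carried out pointwise for a fixed realization $y_1^T$ with no expectation over any source, the bound holds for every individual sequence, which is what the paper later relies on when invoking this lemma inside the asymptotic optimality proof. The only caveat to flag is that the lemma's comparison class is the set of \emph{constant} (i.i.d.) probability assignments $q$ applied at every step, not arbitrary sequential assignments; your identification of the minimizer with the empirical distribution uses this correctly, but the distinction deserves emphasis since the $\tfrac{|\ncalY|}{2}\log T$ regret would be unattainable against arbitrary sequential comparators.
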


\begin{remark}
	Since Lemma \ref{thm:kt} gives a bound on the performance of the KT estimator relative to the best constant probability assignment, made with full knowledge of the state sequence, it can be shown that acting greedily with respect to the KT estimated transition probabilities is asymptotically optimal, (see Theorem 1 of \cite{Farias2010}).
\end{remark}

\begin{Def}[Weighted probability estimation]
	Let $\Sigma$ be a finite context alphabet, $P^{\sigma}_{\texttt{KT}}$ be the KT estimate of node $\sigma \in \Sigma$, and $K$ be the current depth of the context tree model. The context tree weighting strategy then develops a weighted estimate of $P$, which is recursively defined as
	\begin{equation}
		P_{w}^{s}\left(\boldsymbol{\sigma}^{t-1}\right):= \begin{cases}\frac{1}{2} P_{\texttt{KT}}^{s}\left(\boldsymbol{\sigma}^{t-1}\right)+\frac{1}{2} \prod_{\sigma \in \Sigma} P_{w}^{\sigma s}\left(\boldsymbol{\sigma}^{t-1}\right) & \text { if }|s|<K \\ P_{\texttt{KT}}^{s}\left(\boldsymbol{\sigma}^{t-1}\right) & \text { if }|s|=K\end{cases},
		\label{eq:ctw}
	\end{equation}
	which corresponds to a weighting of both the estimated probability of a node occurring and the product of weighted probabilities corresponding to its \emph{children}.
\end{Def}

\begin{remark}
	The weighted probability estimate defined in Definition \ref{eq:ctw} is not unique. However, it is simple, allowing for redundancy analysis, as in \cite[Section VI. B]{Willems1995}, and represents a weighting over the estimated probabilities corresponding to all possible models which reside above the current node in the context tree. Under the weighting of (\ref{eq:ctw}), the context-tree weighting strategy can be shown to achieve the optimal worst-case average redundancy for finite-state sources.
\end{remark}

\begin{remark}
	The storage complexity of context-tree weighting is linear in the sequence length.
\end{remark}

In addition to the transition probabilities, the \emph{cost-to-go} function must be estimated for each context. The estimated cost $\hat{J}(\nby^{\ell+1},\nbw^{\ell})$ is initialized to zero and subsequently updated using the rule
\begin{equation}
	\hat{J}(\nby_{\tau_{c}}^{s}, \nbw_{\tau_{c}}^{s-1}) \leftarrow \underset{w_{s}}{\min} \sum_{\nby_{s+1}} \mathbb{P}(\nby_{s+1}|\nby_{\tau_{c}}^{s}, \nbw_{\tau_{c}}^{s-1}) \times [g(\nby_{s},\nbw_{s},\nby_{s+1}) + \gamma J(\nby_{\tau_{c}}^{s}, \nbw_{\tau_{c}}^{s-1})],
\end{equation}
where $\gamma$ is a weighting term for prior estimates called the \emph{discount factor} and the update is performed by traversing backwards over the outcomes which have been previously observed, and the transition probabilities are estimated using (\ref{eq:kt}). During each step, the action is selected with the intent of either exploiting the action which is known to be most effective or gathering information about under-explored actions. This behavior is controlled by the sequence of exploration probabilities $\{\alpha_{k}\}$.\\

\subsection{Asymptotic Analysis}
We now show that the proposed algorithm converges to the optimal long-term average cost. This analysis follows from the favorable redundancy properties of the CTW probability estimates, and some results from the theory of exact dynamic programming with estimated transition probabilities \cite{Mastin2012}. The basic idea is to define criteria for transition probability and cost-to-go estimates for which acting according to our estimated values will be equivalent to acting on $P$ and $J$ respectively. We use arguments based on Ziv's inequality and the total variation distance of probability estimates to show that the algorithm takes suboptimal actions only a small fraction of the time while exploiting. By decaying the sequence of exploration probabilities fast enough, this ensures average cost optimality.\\

\begin{lemma}
	 Let $\tilde{P}: \ncalY^{K} \times \ncalW^{K} \mapsto [0,1]$ and $\tilde{J}: \ncalY^{K} \times \ncalW^{K-1} \mapsto \nbbR$ be arbitrary transition kernel and value functions. Then, given a discount factor near one, there exists a constant $\varepsilon > 0$ such that when conditions
	 \begin{align}
	 	    \label{eq:pEst}
	 		\left\| \tilde{P}\left(\cdot \mid \nby^{K}, \nbw^{K}\right)-P\left(\cdot \mid \nby^{K}, \nbw^{K}\right)\right\|_{1} &< \varepsilon, \quad \forall \nby^{K}, \nbw^{K} \in \ncalY^{K} \times \ncalW^{K} \\
	 		\left| \tilde{J}\left(\nby^{K}, \nbw^{K-1}\right)-J^{*}\left(\nby^{K}, \nbw^{K-1}\right)\right|&< \varepsilon, \quad \forall \nby^{K}, \nbw^{K-1} \in \ncalY^{K} \times \ncalW^{K-1},
	 \end{align}
 hold, acting greedily with respect to $(\tilde{P},\tilde{J})$ results in actions which are optimal with respect to $(P,J_{\gamma}^{*})$. 
\end{lemma}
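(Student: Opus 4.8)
The plan is to read the statement as a discounted-cost analogue of the ``Simulation Lemma'' of \cite{Kearns2002} and to prove it by a perturbation argument on the one-step Bellman lookahead, in the spirit of the exact-DP-with-estimated-kernel analysis of \cite{Mastin2012}. Acting greedily with respect to a kernel--value pair at a context $(\nby^{K},\nbw^{K-1})$ means selecting the action that minimizes the lookahead operator
\begin{equation*}
	\tilde{Q}(\nbw) = \sum_{\nby'} \tilde{P}(\nby' \mid \nby^{K},\nbw) \left[ g(\nby_{K},\nbw,\nby') + \gamma \tilde{J}(\nby_{2}^{K}\nby',\, \nbw_{2}^{K-1}\nbw) \right],
\end{equation*}
where the arguments of $\tilde{J}$ denote the successor context obtained by appending $(\nby',\nbw)$ and discarding the oldest symbol, exactly as in \eqref{eq:Bellman}; $Q^{*}(\nbw)$ is defined identically using $(P,J^{*})$ with $J^{*} = J_{\gamma}^{*}$ the optimal discounted value. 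The goal is to show $\argmin_{\nbw} \tilde{Q}(\nbw) \subseteq \argmin_{\nbw} Q^{*}(\nbw)$, i.e.\ that every greedy action under the estimates is optimal under the truth. First I would bound $|\tilde{Q}(\nbw)-Q^{*}(\nbw)|$ uniformly over all contexts and actions.

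For the perturbation bound I would add and subtract $\sum_{\nby'} P(\nby'\mid\cdot)[g+\gamma\tilde{J}]$ to split the difference into a transition-error term and a value-error term,
\begin{equation*}
	\tilde{Q}(\nbw)-Q^{*}(\nbw) = \sum_{\nby'} \big(\tilde{P}-P\big)(\nby'\mid\cdot)\,\big[g+\gamma\tilde{J}\big] \;+\; \gamma\sum_{\nby'} P(\nby'\mid\cdot)\,\big(\tilde{J}-J^{*}\big).
\end{equation*}
The first term is at most $\|\tilde{P}-P\|_{1}\,(g_{max}+\gamma\|\tilde{J}\|_{\infty})$ by H\"older's inequality, and the second is at most $\gamma\|\tilde{J}-J^{*}\|_{\infty}$ since $P$ is a probability vector. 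Using the standard bound $\|J^{*}\|_{\infty}\le g_{max}/(1-\gamma)$ for discounted values of bounded costs, together with $\|\tilde{J}\|_{\infty}\le\|J^{*}\|_{\infty}+\varepsilon$, both factors are controlled by the two hypotheses \eqref{eq:pEst}, yielding $|\tilde{Q}(\nbw)-Q^{*}(\nbw)|\le C\varepsilon$ for a constant $C=C(g_{max},\gamma)$ that is finite for every fixed $\gamma<1$.

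The remaining step exploits finiteness of $\ncalY^{K}\times\ncalW^{K}$. I would define the suboptimality gap
\begin{equation*}
	\Delta = \min\left\{\, Q^{*}(\nbw)-\min_{\nbw'} Q^{*}(\nbw') \;:\; Q^{*}(\nbw)>\min_{\nbw'} Q^{*}(\nbw') \,\right\},
\end{equation*}
where the minimum ranges over all contexts $(\nby^{K},\nbw^{K-1})$ and all strictly suboptimal actions $\nbw$. Because the index set is finite, $\Delta>0$. Choosing $\varepsilon<\Delta/(2C)$ forces the $C\varepsilon$-perturbation to preserve the strict ordering between any optimal and any strictly suboptimal action: a strictly suboptimal $\nbw$ satisfies $\tilde{Q}(\nbw)\ge\min Q^{*}+\Delta-C\varepsilon > \min Q^{*}+C\varepsilon\ge\tilde{Q}(\nbw_{\opt})$, so it is never selected by the estimated greedy rule. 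Hence every minimizer of $\tilde{Q}$ is a minimizer of $Q^{*}$, which is the claim.

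The hard part will be establishing the strictly positive gap $\Delta$ while respecting the non-uniqueness of Bellman-optimal policies flagged in the introduction. The subtlety is that ties in $Q^{*}$ carry no gap, so an arbitrarily small perturbation may reassign the greedy choice \emph{among} tied actions; this is harmless precisely because all tied actions are optimal, which is why the conclusion is naturally phrased as membership in $\argmin_{\nbw} Q^{*}$ rather than as equality of the greedy selectors. A secondary concern is that $C$ grows like $1/(1-\gamma)$, so the admissible $\varepsilon$ shrinks as $\gamma\to 1$; this is acceptable because the lemma fixes $\gamma$ first and only then asserts the existence of $\varepsilon$, but it foreshadows the delicacy of transferring the result from the discounted problem to the average-cost criterion of \eqref{eq:avCost}.
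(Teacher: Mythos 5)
Your proof is correct, but it takes a genuinely more self-contained route than the paper: the paper's entire proof is a two-sentence citation argument, obtaining the existence of $\varepsilon$ from the contraction property of the dynamic programming operator and the finiteness of the state-action space \cite[Proposition 4.1]{Bertsekas2006}, and the optimality of acting on the estimates from the bounded-cost-per-stage result of \cite[Theorem 2]{Mastin2012}. What you wrote is essentially an inline reconstruction of what those citations encode: the add-and-subtract decomposition of the one-step lookahead difference into a transition-error term (controlled by H\"older's inequality and the $\ell_{1}$ hypothesis) plus a value-error term (controlled by the sup-norm hypothesis), followed by the observation that finiteness of $\ncalY^{K}\times\ncalW^{K}$ yields a strictly positive minimum suboptimality gap $\Delta$, so any uniform perturbation smaller than $\Delta/2$ cannot promote a strictly suboptimal action into the greedy set. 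The paper's route buys brevity; yours buys transparency: it makes explicit where each of the two hypotheses enters, how the admissible $\varepsilon$ scales with the problem (through $\Delta$, $g_{max}$, and $1/(1-\gamma)$), and why ties in $Q^{*}$ are harmless---which is precisely the non-uniqueness caveat the paper flags in a footnote but never resolves, and which its cited references handle implicitly. Two small repairs you should make: your constant $C$ secretly depends on $\varepsilon$ through the bound $\lV \tilde{J} \rV_{\infty} \leq \lV J^{*} \rV_{\infty} + \varepsilon$, so fix, say, $\varepsilon \leq 1$ before absorbing the quadratic term into $C\varepsilon$; and your gap $\Delta$ is a minimum over a possibly empty set (the degenerate case in which every action is optimal in every context), in which case the conclusion holds vacuously---one sentence dispatches it, but it is needed for $\Delta>0$ to be well posed.
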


\begin{proof}
	The existence of $\vE$ follows immediately from the fact that the dynamic programming operator is a contraction mapping \cite[Proposition 4.1]{Bertsekas2006} and the finiteness of the state-action space. The optimality of acting on these estimates follows from the assumption of bounded cost per stage \cite[Theorem 2]{Mastin2012}.
\end{proof}

\begin{Def}[Total Variation Distance]
	For two probability distributions $p(x)$ and $q(x)$ over common alphabet $\mathcal{X}$, the total variation distance can be defined as
	\begin{equation}
		\operatorname{TV}(p,q) = \frac{1}{2} \sum_{x}|p(x)-q(x)|,
	\end{equation}
	which satisfies the properties of a statistical metric, and is related to KL divergence by Pinsker's inequality $\operatorname{TV}(p,q) \leq \sqrt{ \frac{1}{2} \kl(P||Q)}$, which we will use later on. We note that the condition (\ref{eq:pEst}) is equivalent to a total variation distance no greater than $\vE/2$.
\end{Def}

\begin{Def}[$\bar{\varepsilon}$ inaccuracy]
	Define $\mathcal{I}^{\bar{\varepsilon}}_{t}$ to be the event that at time $t$ at least one of the following conditions hold:
	\begin{enumerate}
		\item $\operatorname{TV}(\hat{P}_{t}(\cdot|\nby_{\tau_{c(t)}}^{t}, \nbw_{\tau_{c(t)}}^{t}),P(\cdot|\nbw_{t-K+1}^{t},\nbw_{t-K+1}^{t})) > \bar{\vE}$
		\item The current context has not been visited prior to $t$.
	\end{enumerate}
\end{Def}

In \cite{Farias2010}, it was shown that if the exploration rate decays slow enough, then controlling the fraction of time the algorithm is $\bar{\varepsilon}$ inaccurate is sufficient to ensure that the fraction of time the algorithm is $\bar{\varepsilon},\bar{K}$ inaccurate goes to zero asymptotically as well. Further, it was shown in Theorem 2 of \cite{Farias2010} that vanishingly low $\bar{\varepsilon},\bar{K}$ inaccuracy is enough to ensure asymptotic average cost optimality for any transition kernel.

\begin{Def}[$\bar{\varepsilon},\bar{K}$ inaccuracy]
	Define $\mathcal{B}_{t}^{\bar{\varepsilon},\bar{K}}$ to be the event that at time $t \geq K$ either of the following conditions hold:
	\begin{enumerate}
		\item The length $d(t)$ of the current context is less than $K$.
		\item There exist $\ell$ and $(\nby^{\ell},\nbw^{\ell})$ such that any of the following hold:
		\begin{enumerate}
			\item $d(t) \leq \ell \leq d(t)+\bar{K}$.
			\item ($\nby^{\ell},\nbw^{\ell})$ contains the current context as a prefix.
			\item The estimated transition probabilities $\hat{P}(\cdot|\nby^{\ell},\nbw^{\ell})$ are more than $\bar{\varepsilon}$ inaccurate or the context $(\nby^{\ell},\nbw^{\ell-1})$ has not been visited prior to time $t$.
		\end{enumerate}
	\end{enumerate}
\end{Def}

\begin{theorem}[Asymptotic Optimality]
	Algorithm \ref{algo:arlz} attains the optimal long-term average cost $\lambda^{*}$ almost surely as $k \rightarrow \infty$.
\end{theorem}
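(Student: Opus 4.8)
The plan is to chain together three established facts: the uniform redundancy bound of the CTW estimator, the stability result for exact dynamic programming under perturbed transition probabilities (Lemma~1), and the exploration-rate machinery from \cite{Farias2010}. The central logical move is to show that the CTW probability estimates $\hat{P}$ converge to the true kernel $P$ fast enough, in total variation distance, that the $\bar{\varepsilon}$-inaccuracy event becomes rare, and then to invoke the argument of Theorem~2 in \cite{Farias2010} to translate rarity of inaccuracy into almost-sure average-cost optimality. First I would fix an arbitrary true model order $U \leq D$ and recall from Lemma~\ref{thm:kt} and the weighted estimate (\ref{eq:ctw}) that the cumulative redundancy of the CTW coding distribution relative to the best tree source of depth at most $D$ is bounded uniformly by a quantity of order $O(\log n)$ over a context visited $n$ times, with a model-cost term bounded by a constant depending only on $|\ncalY|$, $|\ncalW|$, and $D$. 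The key point is that this bound holds \emph{per individual sequence}, not merely on average, which is precisely the advantage of CTW over the Lempel--Ziv scheme of \cite{Farias2010}.

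Next I would convert this redundancy bound into a per-context total-variation bound. Using the fact that the cumulative log-loss redundancy upper-bounds the cumulative KL divergence between $\hat{P}(\cdot \mid \nby^{\ell},\nbw^{\ell})$ and $P(\cdot \mid \nby^{\ell},\nbw^{\ell})$ summed over visits to a given context, a Ces\`aro/averaging argument shows that for any context visited $N(\nby^{\ell},\nbw^{\ell})$ times, the \emph{average} per-visit KL divergence is $O\!\left(\frac{\log N}{N}\right)$, which tends to zero. Applying Pinsker's inequality from the Total Variation Distance definition then yields $\operatorname{TV}(\hat{P},P) \to 0$ along every context that is visited infinitely often. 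Combined with the decaying exploration schedule $\{\alpha_k\}$, which guarantees every context up to depth $D$ is visited infinitely often almost surely while the exploration fraction still vanishes, this drives the fraction of time the algorithm is $\bar{\varepsilon}$-inaccurate to zero. I would then appeal to the result of \cite{Farias2010} that vanishing $\bar{\varepsilon}$-inaccuracy propagates to vanishing $\bar{\varepsilon},\bar{K}$-inaccuracy, and that the latter, together with the bounded-cost assumption and the accuracy of the estimated cost-to-go $\hat{J}$, suffices for $\lambda_\mu \to \lambda^{*}$ almost surely.

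The final piece is to close the loop between probability accuracy and action optimality. Once the TV distance falls below $\varepsilon/2$ and $\hat{J}$ is within $\varepsilon$ of $J^{*}$, Lemma~1 guarantees that greedy actions with respect to $(\hat{P},\hat{J})$ coincide with actions optimal under $(P,J_{\gamma}^{*})$; the accuracy of $\hat{J}$ itself follows because the backward value-iteration update in Algorithm~\ref{algo:arlz} is a $\gamma$-contraction, so $\hat{J}$ inherits convergence from the convergence of $\hat{P}$ by a standard perturbed-fixed-point bound. Taking the discount factor $\gamma$ sufficiently close to one links the discounted-optimal policy to the average-cost-optimal policy via the Bellman equation (\ref{eq:Bellman}), yielding $\lambda^{*}$ in the limit.

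I expect the main obstacle to be the per-context conversion step. The CTW redundancy guarantee is a statement about a single coding probability over the \emph{entire} parsed sequence, whereas Lemma~1 demands a uniform bound at \emph{every} node of depth $K$; reconciling these requires care, because the weighting in (\ref{eq:ctw}) mixes a node's own KT estimate with its descendants, so the marginal conditional estimate at a fixed context is not simply its KT count ratio. The delicate part is arguing that, on the event that the true order is $U \leq D$, the CTW weight concentrates on the correct model so that the effective conditional estimate at each relevant context behaves asymptotically like the KT estimate with redundancy $O(\log N)$, and that this concentration is uniform over the finitely many contexts. A secondary subtlety is ensuring the exploration schedule is simultaneously fast enough to make the exploitation-phase suboptimality vanish yet slow enough to visit every context infinitely often; this is exactly the regime identified in \cite{Farias2010}, and I would import their balancing argument rather than re-derive it.
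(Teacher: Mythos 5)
Your overall architecture matches the paper's: Pinsker's inequality to relate KL divergence to total-variation accuracy, control of the fraction of time the algorithm is $\bar{\varepsilon}$-inaccurate, the machinery of \cite{Farias2010} (propagation to $\bar{\varepsilon},\bar{K}$-inaccuracy under a suitably decaying exploration schedule, then average-cost optimality), and Lemma~1 to turn estimate accuracy into action optimality. The middle differs in one place: where the paper invokes Ziv's inequality to bound the frequency of never-before-visited contexts by $C_{2}/\log T$, you appeal to the bounded depth $D$ and infinitely-often visitation under exploration; given the finite alphabets and Assumption~3 that is a workable substitute.

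However, there is a genuine gap at your central conversion step. You assert that ``the cumulative log-loss redundancy upper-bounds the cumulative KL divergence'' between $\hat{P}$ and $P$, and lean on the per-individual-sequence nature of the CTW bound. This is false pathwise: the per-sequence CTW redundancy controls the realized log-likelihood ratio $\sum_{t} \log\bigl(P(y_{t}\mid\cdot)/\hat{P}_{t}(y_{t}\mid\cdot)\bigr)$, evaluated at the symbols that actually occurred, whereas the quantity you need --- and the quantity $\Delta_{t}$ the paper works with --- is the KL divergence $\kl\bigl(P(\cdot\mid\cdot)\,\|\,\hat{P}_{t}(\cdot\mid\cdot)\bigr)$, an expectation over the next-symbol distribution. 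The two agree only in expectation, so bridging them requires a concentration argument. The paper supplies exactly this: it bounds $\mathbb{P}\bigl(\tfrac{1}{T}\sum_{t}\mathbbm{1}\{\ncalI_{t}^{\bar{\varepsilon}}\}\geq K_{1}\log\log T/\log T\bigr)\leq \exp(-K_{2}T)$ and then applies the first Borel--Cantelli lemma, which is what produces the \emph{almost-sure} statement claimed in the theorem. Your sketch never invokes concentration or Borel--Cantelli, so as written it delivers at most an in-expectation (or in-probability) version of the result. A secondary, smaller imprecision: from a Ces\`aro average of per-visit KL tending to zero you conclude $\operatorname{TV}(\hat{P},P)\to 0$ along every context visited infinitely often; what actually follows is only that the \emph{fraction} of visits with $\operatorname{TV}>\bar{\varepsilon}$ vanishes --- which is, fortunately, exactly what the downstream inaccuracy argument needs, so only the phrasing (not the use) is wrong there.
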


\begin{proof}[]
	Using the properties of the KT estimator and Pinsker's inequality, we can provide a lower bound on the cumulative KL divergence between $P$ and $\hat{P}$, $\sum_{k=1}^{K} \Delta_{k}$ by the event of $\bar{\varepsilon}$ inaccuracy. \\
	
	
	
	We begin by defining
	\begin{equation}
		\Xi_{t} \triangleq \operatorname{TV} \left( P(\cdot|\nby_{t-K+1}^{t},\nbw_{t-K+1}^{t}),\hat{P}_{t}(\cdot|\nby_{\tau_{c(t)}}^{t},\nbw_{\tau_{c(t)}}^{t}) \right),
	\end{equation}
	and
	\begin{equation}
		\Delta_{t} \triangleq \kl \left(P(\cdot|\nby_{t-K+1}^{t},\nbw_{t-K+1}^{t}),\hat{P}_{t}(\cdot| \nby_{\tau_{c(t)}}^{t},\nbw_{\tau_{c(t)}}^{t} ) \right)
	\end{equation}
	
	\begin{align*}
		\frac{1}{T} \sum_{t=1}^{T} \Delta_{t} &\geq \frac{2 \bar{\vE}^{2}}{T} \sum_{t=1}^{T} \mathbbm{1}_{[\Delta_{t} \geq 2 \bar{\vE}^{2} ]} \\
											  &\geq \frac{2 \bar{\vE}^{2}}{T} \sum_{t=1}^{T} \mathbbm{1}_{[\Xi_{t} > \bar{\vE}]}
	\end{align*}

	Let $F_{t}$ be the event that the current context $(\nby_{\tau_{c(t)}}^{t},\nbw_{\tau_{c(t)}}^{t})$ has not been visited prior to time $t$. By Ziv's lemma \cite[Lemma 13.5.5]{Cover2006}, we can write
	\begin{equation*}
		\sum_{t=1}^{T} \mathbbm{1}_{F_{t}} = c(T) \leq \frac{C_{2}T}{\log(T)}
	\end{equation*}

	Combining these two inequalities and using the definition of $\bar{\vE}$ inaccuracy, we have
	\begin{align}
		\frac{1}{T} \sum_{t=K}^{T} \mathbbm{1}\{\ncalI_{t}^{\bar{\vE}}\} &\leq \frac{1}{T} \sum_{t=1}^{T} (\mathbbm{1}_{\Xi_{t}>\bar{\vE}} + \mathbbm{1}_{F_{t}}) \nonumber \\ 
																		&\leq \frac{1}{2 \bar{\vE} T} \sum_{t=1}^{T} \Delta_{t} + \frac{C_{2}}{\log(T)},
	\end{align}
	from which we see the dominating term is $\Delta_{t}$, the KL divergence between the estimated and true transition probability kernels. From the redundancy property of the Context-Tree weighting method, the probability that $\Delta_{t}$ is large enough to cause $\bar{\vE}$ inaccuracy can be bounded by a term which vanishes as $T \rightarrow \infty$. Thus, we can bound the probability that $\bar{\vE}$ inaccuracy occurs very often in the following form
	\begin{equation*}
	\mathbb{P} \left( \frac{1}{T}\sum_{t=K}^{T} \mathbbm{1}\{\ncalI_{t}^{\bar{\vE}}\} \geq K_{1} \frac{\log \log T}{\log T} \right)  \leq \exp( - K_{2} T),
	\end{equation*}
	where $K_{1}$ and $K_{2}$ are constants.\\

	By the first Borel-Cantelli lemma, the fraction of the time $\bar{\vE}$ inaccuracy occurs is vanishingly small as $T \rightarrow \infty$. Having established this, we can apply Lemma 4 of \cite{Farias2010}, from which we can conclude that given a sequence of exploration probabilities that is non-increasing and of sufficiently fast decay,
	\begin{equation}
		\lim_{T \rightarrow \infty} \frac{1}{T} \sum_{t=K}^{T} \mathbbm{1}\{\ncalB_{t}^{\bar{\vE},\bar{K}}\} = 0 \quad \textit{almost surely}.
		\label{eq:ekIn}
	\end{equation}

	Average cost optimality then follows quickly. We can define the event that the algorithm selects a suboptimal action as 
	\begin{equation*}
		\{\nbw_{t} \notin \ncalW^{*}_{t}\} \subset \ncalB_{t}^{\bar{\vE},\bar{K}} \cup \{\textrm{Exploration at time t} \},
	\end{equation*}
	which occurs a vanishingly low fraction of the time as $T \rightarrow \infty$, given (\ref{eq:ekIn}) as well as the earlier assumption that the sequence of exploration probabilities decays sufficiently quickly. We have previously assumed that the optimal average cost was independent of the initial state of the Markov process. We can define $G_{n}(\nby^{k},\nbw^{k})$ to be the average cost of an optimal policy over a finite time window $T_{\delta}$, starting at time $t = n$. We can pick $T_{\delta}$ large enough such that
	\begin{equation*}
		\E[\max |G_{n} - \lambda^{*}|] < \varepsilon,
	\end{equation*}
	where each of the random variables $\max |G_{n} - \lambda^{*}|$ are independent and identically distributed over the range of $n$. By the strong law of large numbers we then have
	\begin{equation*}
		\limsup_{T \rightarrow \infty} \left| \frac{1}{T} \sum_{t=1}^{T} (g(\nby_{t},\nbw_{t}) - \lambda^{*}) \right| \leq \varepsilon,
	\end{equation*}
	with probability one.	
\end{proof}

The universal learning approach is particularly effective due to its general structure. Algorithm \ref{algo:arlz} will asymptotically converge to the long-term cost optimal for any stationary MDP of finite order and a discount factor sufficiently large, $\gamma \approx 1$. Thus, the common MDP, contextual bandit, and multi-armed bandit sequential decision formulations are all contained within this umbrella, and a large class of problems can be optimally solved regardless of the structure of the underlying transition probabilities.\\

\section{Cost Function Design}
\label{se:cost}
Due to the large body of research on the statistical theory of radar detection and estimation, many rigorous performance measures can be utilized. Generally, the problem of performance feedback is approached from either a control-theoretic or information-theoretic perspective. The former involves direct feedback from the tracking system to improve \emph{system level} performance as opposed to measurement level performance, and optimizes quantities such as the mean square tracking error or size of the target validation gate in measurement space \cite{Kershaw1994}. The information-theoretic perspective generally aims to maximize mutual information between the target and received signal, as in \cite{Bell1993}. However, to obtain a closed form expression, simplifications are often necessary.\\

For example, consider the problem of minimizing squared tracking error. In most cases, it is not possible to evaluate the mean square error (MSE) matrix analytically \cite{Bell2015}. Thus, it is common to use the \emph{Bayesian Cram\`er-Rao lower bound} in place of the MSE matrix. For target tracking, this involves conditioning on the observed data and computing the \emph{predicted conditional Cram\`er-Rao lower bound} (PC-CRLB). The PC-CRLB consists of a prior term and a data term. Unfortunately, the data term is difficult to compute in general, and it is common to assume a Gaussian measurement model.\\

In the information theoretic viewpoint, the target's impulse response is assumed to be a random vector $\mathbf{g}(t)$. If the radar transmits waveform $x(t)$, the resulting scattered signal $\mathbf{z}(t)$ is a finite-energy random process given by the convolution of $\mathbf{g}(t)$ and $x(t)$. Thus, a reasonable goal is to find waveforms which maximize the mutual information $I(\mathbf{g}(t); \mathbf{y}(t))$, where $\mathbf{y}(t)$ is the sum of $\mathbf{z}(t)$ and an additive noise process. The conditional mutual information $I(\mathbf{g}(t);\mathbf{y}(t)|x(t))$ is then easily computed if $\mathbf{g}(t)$ is a Gaussian process and the additive noise is Gaussian and independent of the transmitted waveform and target. Under these restrictive assumptions, Bell \cite{Bell1993} develops an optimal waveform design algorithm, based on the information-theoretic idea of waterfilling. Unfortunately, the proposed approach requires prior knowledge of the variance of $\mathbf{g}(t)$.\\

Under both viewpoints, modeling assumptions are required for tractable analysis. We can instead consider similar approaches, where the distributions are \emph{learned} over time by considering the relationship between particular waveform/observation pairs and the associated cost. The first objective function utilized is will be referred to as the tracking innovation objective and is defined as follows.\\

\begin{Def}[Tracking Innovation]
	The tracking innovation objective function is given by 
	\begin{equation}
		g_{\texttt{track}} \triangleq (\mathbf{Z}_{k}-\mathbf{\hat{X}}_{k})^{2},
	\end{equation}
	where $\mathbf{Z}_{k}$ is the current unfiltered measurement vector containing a range and velocity estimate for the target at time step $k$, and $\mathbf{\hat{X}}_{k}$ is the most recent target state estimate given by the tracking filter.
\end{Def}

Additionally, we propose an information theoretic objective function which seeks to minimize the negative entropy in the delay-Doppler image. This objective is defined as follows.

\begin{Def}[Negative range-Doppler Entropy]
	The negative entropy objective function is given by
	\begin{equation}
		\label{eq:negentropy}
		g_{\texttt{entr.}} \triangleq \sum_{i=1}^{N} \sum_{j=1}^{M} p_{ij}(k) \log(p_{ij}(k)),
	\end{equation}
	where the probability mass function $p_{ij}(k)$ is the probability that the target is located at delay-Doppler coordinate $(i,j) \in \mathcal{H}$ given the entire sequence of measurements. In practice, the probability of a target being present can be established using approximations, as in \cite{LaScala2005}, or by setting a detection threshold, and calculating a normalized distance from the energy in each cell to the threshold to establish a probability of the target being present. In this work, we opt for the latter approach. If the energy in a particular cell is very far from the detection threshold, then there is little ambiguity. Thus, minimizing (\ref{eq:negentropy}) will reduce uncertainty about the target's position.
\end{Def}

\section{Waveform Library Design}
\label{se:library}
Thus far, we have mathematically treated the waveform selection process as a stochastic control problem, but have neglected many of the important considerations for practical radar systems. Of particular importance is the class of waveforms the radar will select from, given by $\mathcal{W}$ in our above formulation. This subject has been specifically treated in \cite{Cochran2009}, \cite{Howard2004}, \cite{Suvorova2006}. In these works, information theoretic criteria are proposed to measure the effectiveness of a particular set of waveforms. Generally, the goal is to compensate for lack of real-time knowledge about the system's current state by utilizing a waveform library that is rich enough such that a sufficiently good waveform is contained for every possible performance criterion and real-time information state that may be encountered. Cochran et al. suggest the following measure of effectiveness
\begin{equation}
	G_{f}(\mathcal{W}) = \int_{\mathbf{P} > 0} \max_{w \in \mathcal{W}} \log \det (\mathbf{I}+\mathbf{R}_{w}^{-1} \mathbf{P}) \rm{d}F( \mathbf{P} ),
\end{equation}
where $\mathbf{P} > 0$ is the positive semidefinite state covariance matrix. Thus, this measure represents an expected value over the distribution of all possible state covariance matrices. Unfortunately, this formulation requires knowledge of the the distribution of the state covariance matrix, which is specific to the particular radar scene being considered. While this information might be available from the tracking system, it is also not explicitly clear how to encode arbitrary information into this distribution in general. However, from the above formulation, some key design choices can be highlighted. For example, in \cite{Suvorova2006}, this concept is utilized to show that for a library of linear FM waveforms, all but the waveforms with the minimum and maximum chirp rates can be removed without a loss of redundancy.\\

In practice, a waveform library should contain enough waveforms such that the signals have sufficiently different ambiguity properties and frequency responses. Additionally, the waveforms considered must be easy to implement in a practical system. Finally, the waveform library should be kept sufficiently small, so that each waveform can be easily associated with particular contexts and the complexity of any associated decision process is kept manageable.\\

In this work, we focus on a waveform library containing three well-known radar waveforms, the Linear frequency modulated (LFM) chirp, the phase-coded pulse using a Zadoff-Chu sequence to determine the phase values, and a generalized nonlinear FM (NLFM) waveform.\\ 

The complex envelope of the LFM chirp is expressed as
\begin{equation}
	\tilde{s}(t) = \frac{1}{\sqrt{T}} \textrm{rect} \left(\frac{t}{T} \right) \exp(j\pi k t^{2}), \quad k = \pm \frac{B}{T},
\end{equation}
where $T$ is the pulse duration, $B$ is the bandwidth, and $\textrm{rect}(\cdot)$ is the rectangular function. LFM is the most popular pulse-compression method due to the favorable delay resolution and ease of implementation. However, LFM suffers from considerable range-Doppler coupling due to the shape of the associated AF. Additionally, relatively strong delay sidelobes can remain in the autocorrelation function. A detailed description of the properties of LFM waveforms can be found in Chapter 4 of \cite{Levanon2004}.\\

To provide diversity in the waveform ambiguity properties, we also consider phase-coded pulses, which are another common method for pulse compression. This approach breaks the pulse into $M$ bits of equal duration. Each bit is then coded with a different phase value. The complex envelope of a phase coded pulse is given by
\begin{equation}
	\tilde{s}(t) = \frac{1}{\sqrt{T}} \sum_{m=1}^{M} u_{m} \mathrm{rect} \left[ \frac{t-(m-1)-t_{b}}{t_{b}}\right],
\end{equation}
where $u_{m} = \exp(j \phi_{m})$ and the set of $M$ phases $\{\phi_{i}\}_{i=1}^{M}$ is the code associated with the waveform. Here, we use the well-known Zadoff-Chu code, which is applicable for any code length $M$. These sequences are known for their perfect auto-correlation properties, and are described in detail in Chapter 5 of \cite{Levanon2004}. \\

The final class of waveform we consider is a generalized NLFM chirp \cite{Papandreou1994,Sira2007,Sira2009}, having complex envelope
\begin{equation}
	\tilde{s}(t)=a(t) \exp \left(j 2 \pi b \xi\left(\frac{t}{t_{r}}\right)\right), \quad|t| \leq \frac{T}{2}+t_{f},
\end{equation}
where $a(t)$ is a trapezoidal envelope with rise/fall time $t_{f} \ll T/2$, $b \in \mathbb{R}$ is the FM rate parameter, $\xi(\frac{t}{t_{r}})$ is the chirp's phase function, and $t_{r} > 0$ is a reference point in time. By varying the phase function $\xi$, different FM waveforms, such as hyperbolic and exponential FM chirps \cite{Sira2007}. It has been observed that when the waveform's time-frequency characteristics are dynamically varied, nonlinear FM waveforms often provide better tracking performance than linear FM. One reason for this behavior is minimal range-Doppler coupling, which is desirable for many tracking scenarios. 

\section{Numerical Results}
In this section, we examine the performance of the universal learning approach in two classes of radar scenes. We compare the proposed algorithm to the optimal stationary policy, the active Lempel-Ziv algorithm proposed in \cite{Farias2010}, as well as the optimal policy for a first-order MDP formulation, such as in \cite{Thornton2020}. We first consider waveform-agile radar tracking in an adversarial scene, where the measurement process depends on the radar's previously transmitted waveforms. An example of a practical scene of this type would be in the presence of an adaptive emitter, which tracks the radar's behavior over an extended time interval and adjusts its transmission policy accordingly. We additionally consider a scene which evolves according to a higher-order Markov process, but does not depend on the radar's choice of waveforms. An example of this type of scene is a non-cooperative radar-communications coexistence scenario, or tracking in clutter with temporal correlations.\\

The simulations consist of 50 target tracks, where each track consists of a randomly initialized target moving at constant velocity for $200$ radar CPIs. A CPI consists of $128$ radar pulses, with a pulse repetition interval of $.4\textrm{ms}$. At the conclusion of each CPI, the radar makes an observation of the target's position and feeds the measurement into a tracking filter. To characterize the radar's performance, we examine the RMSE over the course of each target track, as well as the average measured \texttt{SINR} over the course of the target track.\\

\begin{figure*}[t]
	\centering
	\includegraphics[scale=0.6]{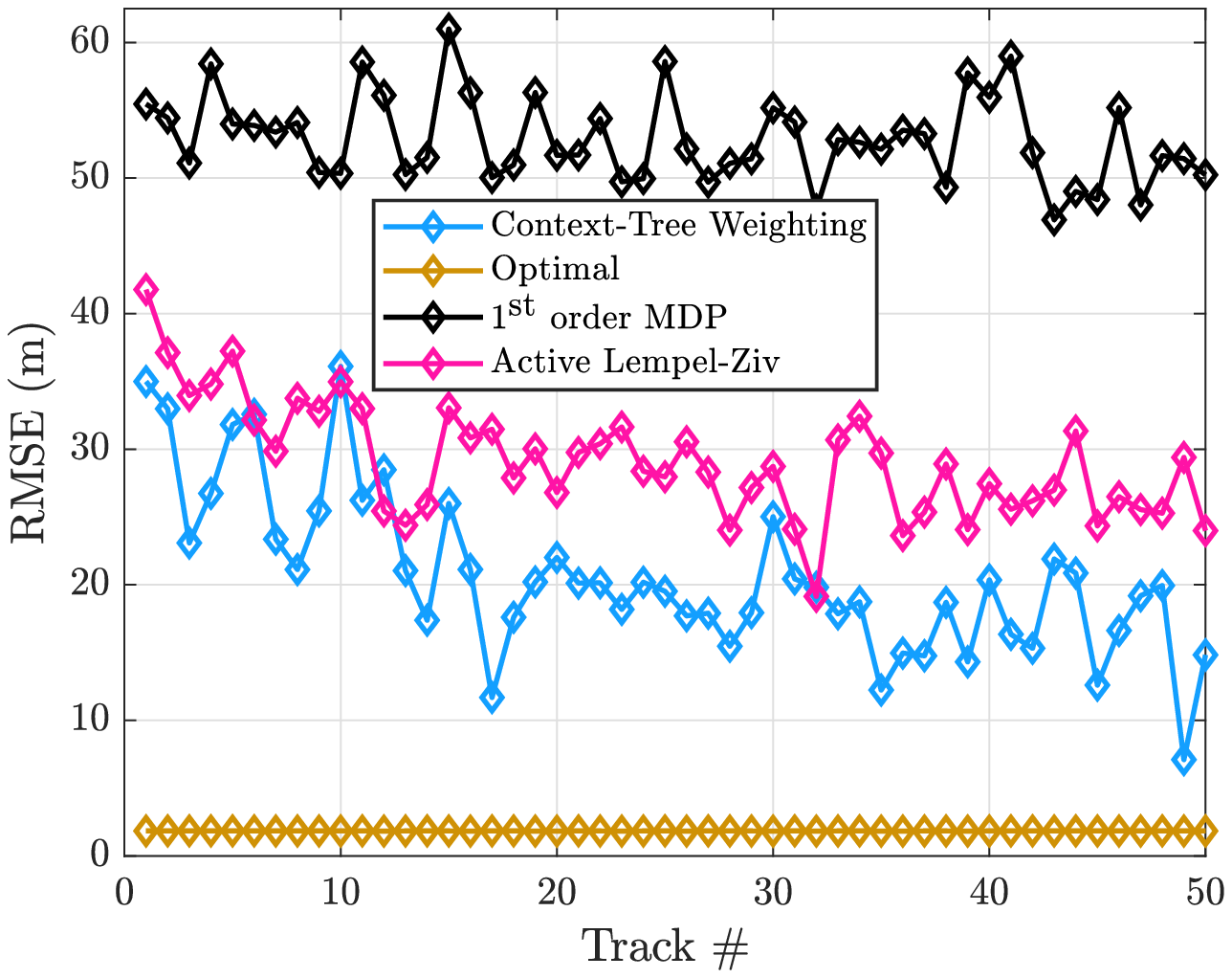}
	\includegraphics[scale=0.6]{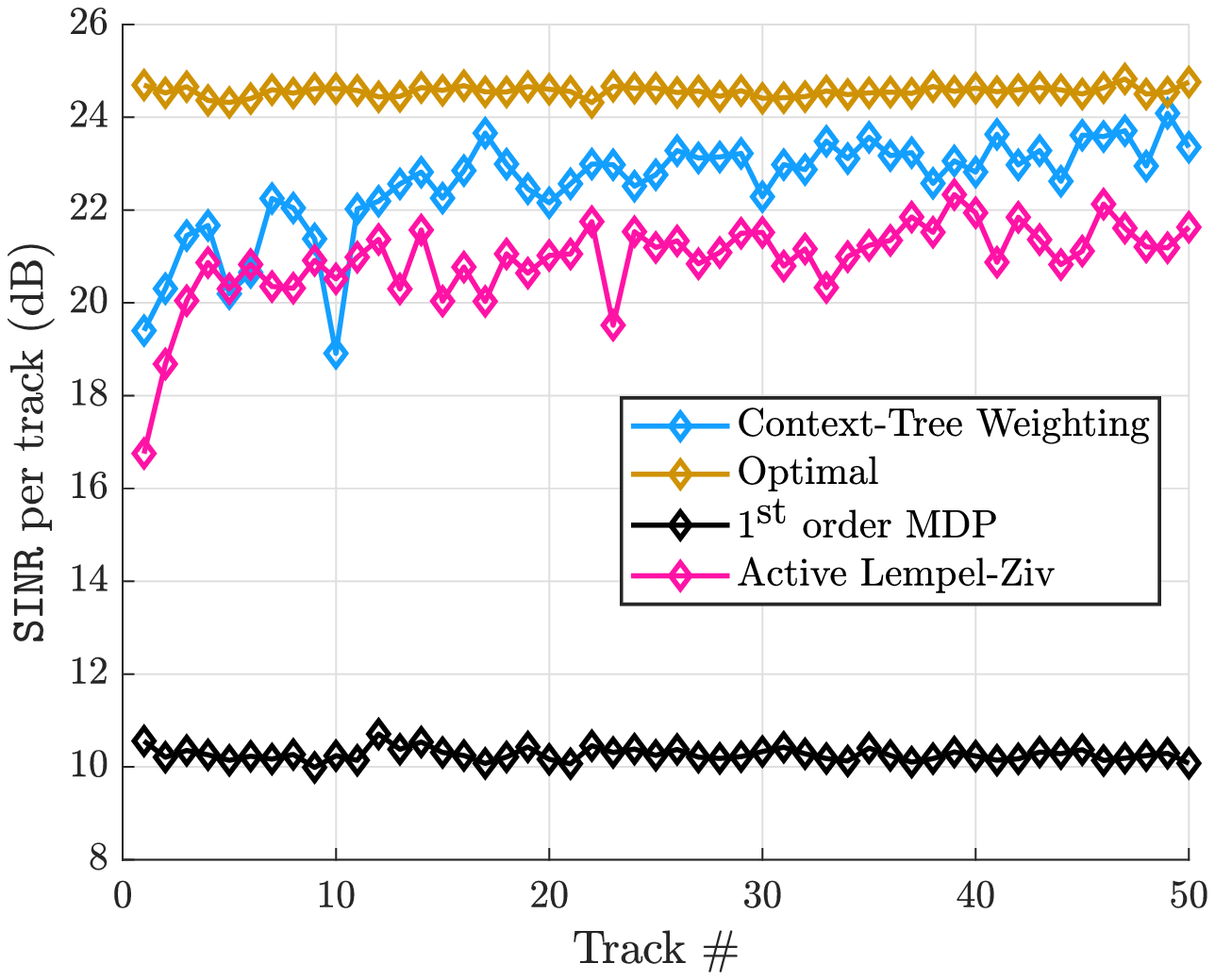}
	\caption{\textsc{Averge SINR} and \textsc{RMSE} per track in the presence of an adaptive emitter. We examine the radar's performance over a sequence of 50 target tracks, with each target track consisting of 200 radar measurements (CPIs). We observe that the universal learning algorithms tend towards the optimal policy and significantly outperform a first order MDP solution.}
	\label{fig:adversarial}
\end{figure*}

\begin{figure*}[t]
	\centering
	\includegraphics[scale=0.6]{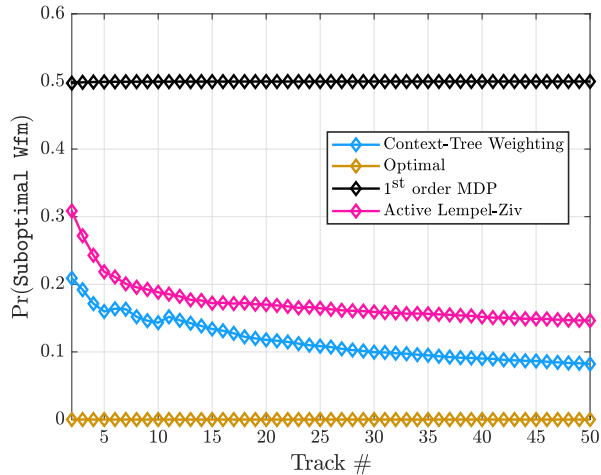}
	\caption{\textsc{Proportion of suboptimal waveforms} per target track, examined over the course of 50 tracks for each waveform selection algorithm in the non-cooperative scene, which is a Markov process of order $U = 3$. We note that reducing the number of actions in each context to one reduces the number of paths through the context-tree and improves convergence behavior.}
	\label{fig:convAdv}
\end{figure*}

In Figure \ref{fig:adversarial}, we observe the tracking performance of each waveform selection algorithm in the presence of an adaptive emitter. The emitter adapts it's state based on the radar's two most recently transmitted waveforms, as well as the emitter's previous state. Thus, the scene is a Markov process of order $U = 2$ that depends on the radar's actions. We see that the CTW algorithm begins to approach the optimal policy within the finite horizon. The performance of the active-LZ algorithm lags slightly behind the CTW algorithm, and typically performs around $2$-$3$dB worse than CTW.\\

Figure \ref{fig:convAdv} shows the proportion of suboptimal waveforms selected by each algorithm. We observe that the $1^{\textrm{st}}$ order MDP is unable to capture the target channel's higher-order dependence on the radar's past actions, and thus transmits sub-optimal waveforms at a fixed rate of $0.5$. As expected, the universal learning algorithms tend toward the optimal waveform selection policy as the sequence of target tracks proceeds. However, we note that the CTW algorithm immediately provides better performance than the active-LZ algorithm and maintains an advantage throughout the sequence. This can be attributed to the CTW algorithm using a weighted mixture of long context weights, whereas the active-LZ algorithm only uses information from the current level of the context-tree.\\

\begin{figure*}[t]
	\centering
	\includegraphics[scale=0.6]{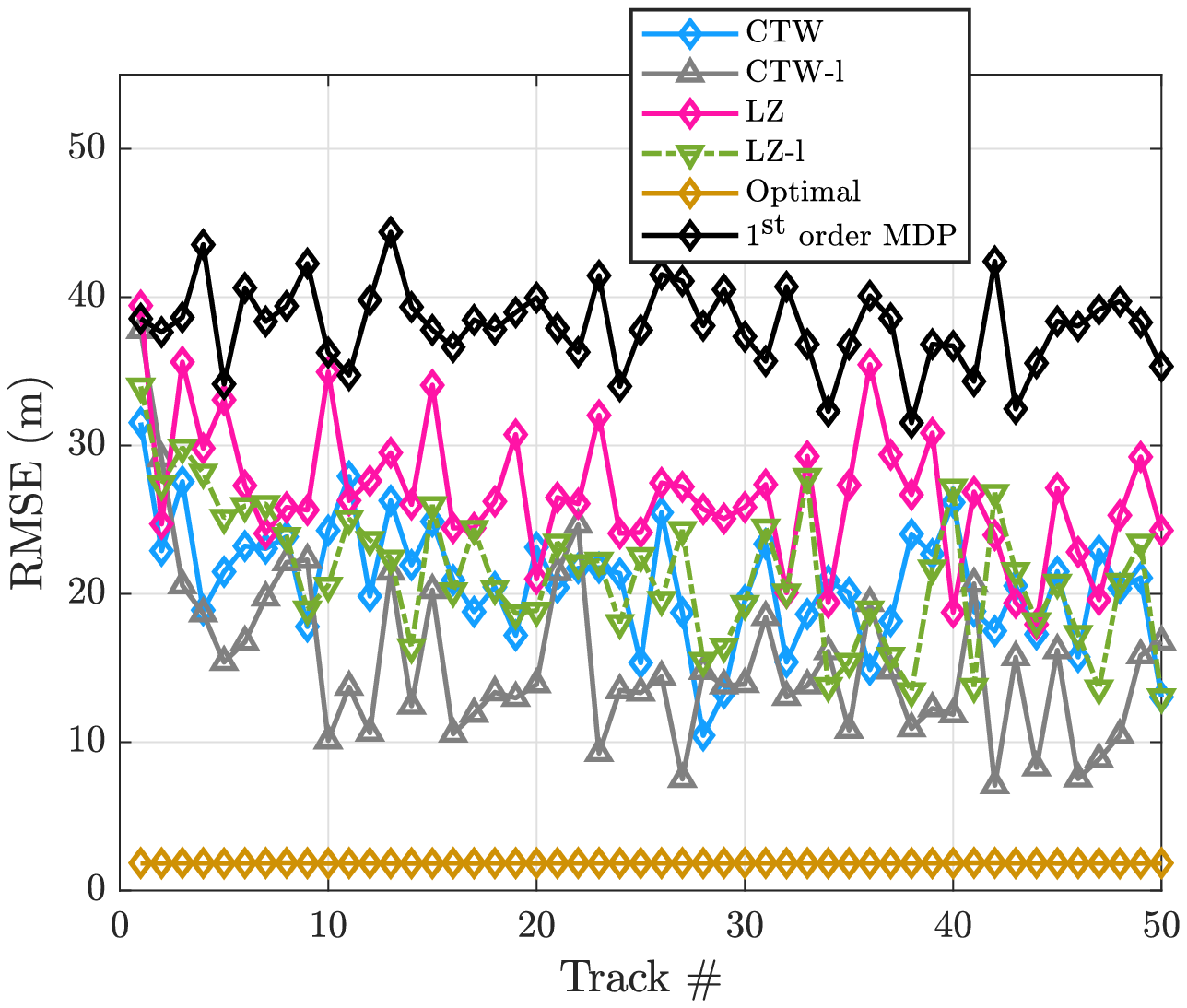}
	\includegraphics[scale=0.6]{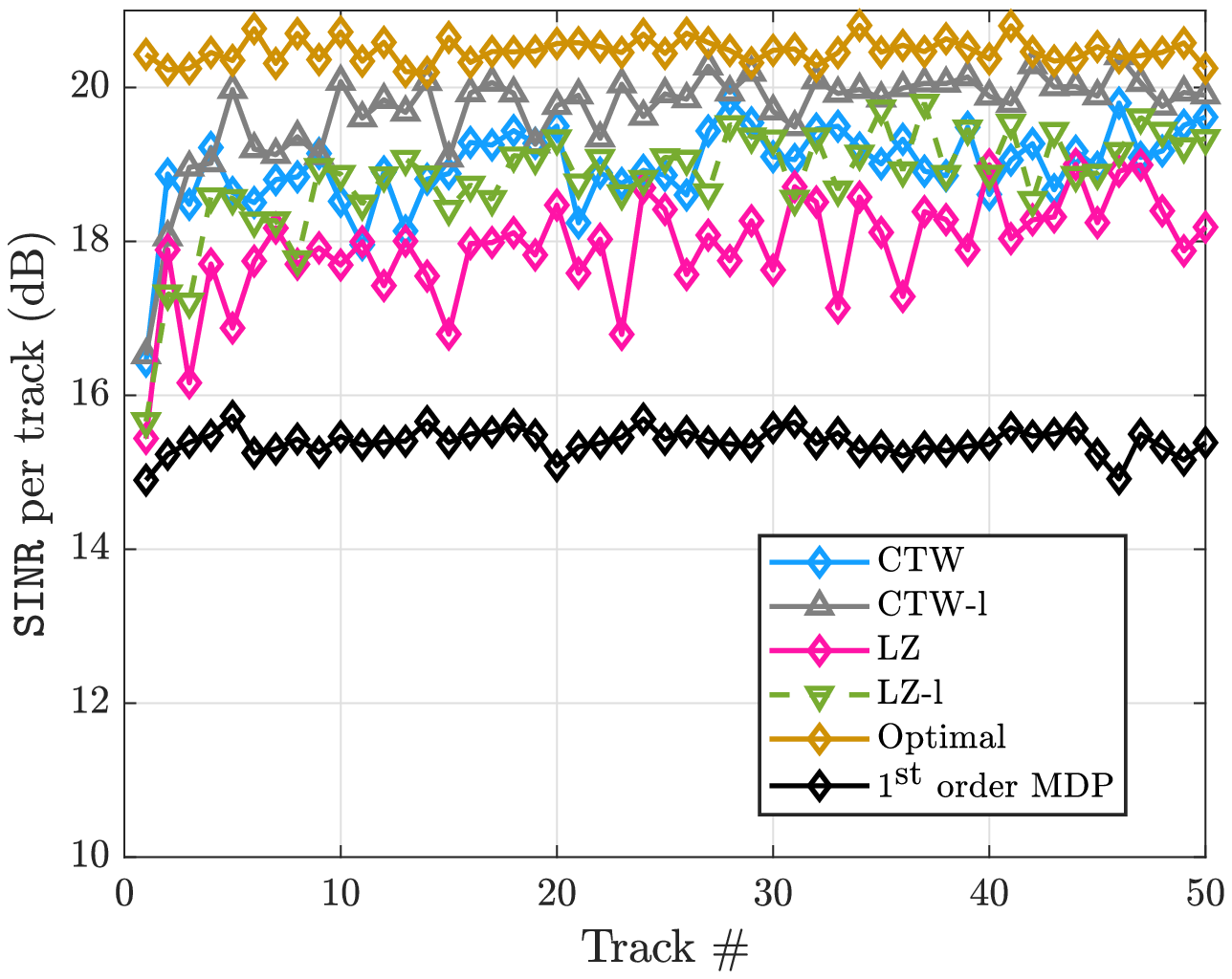}
	\caption{\textsc{Averge SINR} and \textsc{RMSE} per track in the presence of an adaptive emitter. We examine the radar's performance over a sequence of 50 target tracks, with each target track consisting of 200 radar measurements (CPIs). We observe that the universal learning algorithms }
	\label{fig:markov}
\end{figure*}

\begin{figure*}[t]
	\centering
	\includegraphics[scale=0.6]{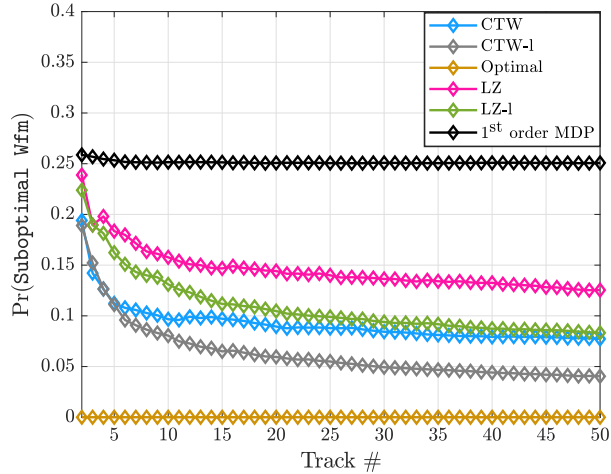}
	\caption{\textsc{Proportion of suboptimal waveforms} per target track, examined over the course of 50 tracks for each waveform selection algorithm in the non-cooperative scene, which is a Markov process of order $U= 3$. We note that reducing the number of actions in each context to one reduces the number of paths through the context-tree and improves convergence behavior.}
\end{figure*}

In Figure \ref{fig:markov}, we examine the tracking performance of each algorithm in the presence of a non-cooperative scene. The scene is a Markov process of order 3, which we compute using the procedure described in \cite{Raftery1985}. Since this scene's evolution does not depend on the radar's actions, we also examine the performance of variants of the CTW and active-LZ algorithms which consider contexts of the form $(\nby_{1}^{s},\nba_{s-1})$, which take into account the entire state sequence, but only the most recent action taken by the radar. Since many realistic scenes do not adapt to the behavior of the radar, this is a reasonable modification for many practical scenarios. This formulation greatly reduces the number of sample paths in the context-tree model and is expected to improve tracking performance since the exploration window can be greatly reduced.\\

We observe that limiting the dependence on previous actions does result in favorable performance in the finite time horizon examined here. In particular, the limited CTW algorithm (labeled as CTW-l in Figure \ref{fig:markov}) performs only slightly worse than the optimal policy in terms of average \texttt{SINR} per track. While the distance from the optimal policy is more significant in terms of RMSE, we observe that the general trend once again holds. 

\section{Conclusion and Future Directions}
\label{se:conclusion}
We have studied an asymptotically optimal stochastic control approach for radar waveform selection in any target channel that can be represented as a stationary Markov process of finite order. To jointly estimate the transition behavior and memory length of the environment, we employed an online learning approach using tools from universal source coding and prediction, namely the context-tree weighting algorithm. In particular, we applied a multi-alphabet version of the Context Tree Weighting strategy to represent the learning environment with optimal average redundancy across all tree sources.\\

This work has extended the previous literature on cognitive radar by considering variable order memory encoding for a general class of scenes. We have shown that this formulation provides additional generality, and results in performance improvements when compared to a $1^{\textrm{st}}$ order MDP. Importantly, our proposed algorithm inherits many of the desirable asymptotic properties of universal source coding and prediction, while making minimal assumptions regarding the radar scene itself. We have also demonstrated how prior knowledge can be used to guide the learning process by reducing the number of sample paths, or size of the probabilistic model, when the environment does not depend on a long history of the radar's actions.\\

While the results obtained for a simple adversarial scenario are promising, there are several key limitations of the current approach. First, while our algorithm has a quicker rate of convergence than the active-LZ algorithm described in \cite{Farias2010}, sample efficiency is still a major concern for active target tracking applications. In many practical cases, it may be more beneficial to constrain the model class and improve the convergence of the algorithm. Approaches such as meta-learning or transfer meta-learning may be appropriate methods for ensuring sample-efficiency over short horizon target tracks by capturing information about task similarity \cite{Thornton2021a}.\\

There are several areas for continued investigation. Even though the waveform selection problem has been well-studied, both computational and conceptual challenges remain. First, a deeper investigation of how prior information about target or scene behavior could be used to reduce the number of sample paths in universal learning algorithms would be of practical value, due to the sample-efficiency concerns mentioned above. Additionally, this work has considered the computation of a Bellman-optimal policy, which optimizes the discounted sum of long-term rewards. Future work could focus on optimizing for worst-case performance metrics, which may be of value in target tracking problems, where the goal is often to minimize the probability of a track being lost. Finally, there is a trade-off between the information a decision-maker requires to learn a specific objective and the resulting sub-optimality of a computed policy could be studied. In many cases, it would be preferable to select an objective with a slight sub-optimality cost in order to experience significantly improved convergence behavior.\\

\bibliographystyle{IEEEtran}
\bibliography{trackBib}{}

\begin{thebibliography}{10}
\providecommand{\url}[1]{#1}
\csname url@samestyle\endcsname
\providecommand{\newblock}{\relax}
\providecommand{\bibinfo}[2]{#2}
\providecommand{\BIBentrySTDinterwordspacing}{\spaceskip=0pt\relax}
\providecommand{\BIBentryALTinterwordstretchfactor}{4}
\providecommand{\BIBentryALTinterwordspacing}{\spaceskip=\fontdimen2\font plus
\BIBentryALTinterwordstretchfactor\fontdimen3\font minus
  \fontdimen4\font\relax}
\providecommand{\BIBforeignlanguage}[2]{{%
\expandafter\ifx\csname l@#1\endcsname\relax
\typeout{** WARNING: IEEEtran.bst: No hyphenation pattern has been}%
\typeout{** loaded for the language `#1'. Using the pattern for}%
\typeout{** the default language instead.}%
\else
\language=\csname l@#1\endcsname
\fi
#2}}
\providecommand{\BIBdecl}{\relax}
\BIBdecl

\bibitem{Thornton2021}
C.~E. Thornton, R.~M. Buehrer, and A.~F. Martone, ``{Waveform Selection for
  Radar Tracking in Target Channels With Memory via Universal Learning},'' in
  \emph{Proc. Military Commun. Conf.}, 2021.

\bibitem{Woodward1953}
P.~M. Woodward, \emph{{Probability and Information Theory with Applications to
  Radar}}.\hskip 1em plus 0.5em minus 0.4em\relax New York: Pergamon, 1953.

\bibitem{Levanon2004}
N.~Levanon and E.~Mozeson, \emph{{Radar Signals}}.\hskip 1em plus 0.5em minus
  0.4em\relax New Jersey: John Wiley and Sons, 2004.

\bibitem{Aasen1976}
M.~D. Aasen, ``Improvement of {EMC} by applying ambiguity and environmental
  diagrams to the design of radar waveforms,'' \emph{IEEE Transactions on
  Electromagnetic Compatibility}, vol. EMC-18, no.~2, pp. 74--79, 1976.

\bibitem{Sira2009}
S.~P. Sira, Y.~Li, A.~Papandreou-Suppappola, D.~Morrell, D.~Cochran, and
  M.~Rangaswamy, ``Waveform-agile sensing for tracking,'' \emph{IEEE Signal
  Processing Magazine}, vol.~26, no.~1, pp. 53--64, 2009.

\bibitem{Simmons1992}
J.~Simmons, P.~Saillant, and S.~Dear, ``Through a bat's ear,'' \emph{IEEE
  Spectrum}, vol.~29, no.~3, pp. 46--48, 1992.

\bibitem{Bell1988}
M.~R. Bell, ``Information theory and radar: Mutual information and the design
  and analysis of radar waveforms and systems,'' Ph.D. dissertation, California
  Institute of Technology, 1988.

\bibitem{Bell1993}
M.~R. {Bell}, ``Information theory and radar waveform design,'' \emph{IEEE
  Trans. Inf. Theory}, vol.~39, no.~5, pp. 1578--1597, 1993.

\bibitem{Hong2005}
S.-M. Hong, R.~Evans, and H.-S. Shin, ``Optimization of waveform and detection
  threshold for range and range-rate tracking in clutter,'' \emph{IEEE
  Transactions on Aerospace and Electronic Systems}, vol.~41, no.~1, pp.
  17--33, 2005.

\bibitem{Sira2007}
S.~P. Sira, A.~Papandreou-Suppappola, and D.~Morrell, ``Dynamic configuration
  of time-varying waveforms for agile sensing and tracking in clutter,''
  \emph{IEEE Transactions on Signal Processing}, vol.~55, no.~7, pp.
  3207--3217, 2007.

\bibitem{Sira2007a}
S.~P. Sira, D.~Cochran, A.~Papandreou-Suppappola, D.~Morrell, W.~Moran, S.~D.
  Howard, and R.~Calderbank, ``Adaptive waveform design for improved detection
  of low-rcs targets in heavy sea clutter,'' \emph{IEEE Journal of Selected
  Topics in Signal Processing}, vol.~1, no.~1, pp. 56--66, 2007.

\bibitem{Zhu2017}
Z.~Zhu, S.~Kay, and R.~S. Raghavan, ``{Information-Theoretic Optimal Radar
  Waveform Design},'' \emph{IEEE Signal Processing Letters}, vol.~24, no.~3,
  pp. 274--278, 2017.

\bibitem{Kay2009}
S.~Kay, ``{Waveform Design for Multistatic Radar Detection},'' \emph{IEEE
  Transactions on Aerospace and Electronic Systems}, vol.~45, no.~3, pp.
  1153--1166, 2009.

\bibitem{Kay2007}
------, ``{Optimal Signal Design for Detection of {Gaussian} Point Targets in
  Stationary Gaussian Clutter/Reverberation},'' \emph{IEEE Journal of Selected
  Topics in Signal Processing}, vol.~1, no.~1, pp. 31--41, 2007.

\bibitem{DeLong1969}
D.~DeLong and E.~Hofstetter, ``{The design of clutter-resistant radar waveforms
  with limited dynamic range},'' \emph{IEEE Transactions on Information
  Theory}, vol.~15, no.~3, pp. 376--385, 1969.

\bibitem{Kershaw1994}
D.~J. {Kershaw} and R.~J. {Evans}, ``Optimal waveform selection for tracking
  systems,'' \emph{IEEE Trans. Inf. Theory}, vol.~40, no.~5, pp. 1536--1550,
  1994.

\bibitem{Kershaw1997}
------, ``Waveform selective probabilistic data association,'' \emph{IEEE
  Transactions on Aerospace and Electronic Systems}, vol.~33, no.~4, pp.
  1180--1188, 1997.

\bibitem{Niu2002}
{Ruixin Niu}, P.~{Willett}, and Y.~{Bar-Shalom}, ``Tracking considerations in
  selection of radar waveform for range and range-rate measurements,''
  \emph{IEEE Transactions on Aerospace and Electronic Systems}, vol.~38, no.~2,
  pp. 467--487, 2002.

\bibitem{LaScala2005}
B.~Scala, M.~Rezaeian, and B.~Moran, ``Optimal adaptive waveform selection for
  target tracking,'' in \emph{2005 7th International Conference on Information
  Fusion}, vol.~1, 2005, pp. 6 pp.--.

\bibitem{PLiu2020}
P.~{Liu} \emph{et~al.}, ``Decentralized automotive radar spectrum allocation to
  avoid mutual interference using reinforcement learning,'' \emph{IEEE Trans.
  Aerosp. and Electron. Syst.}, pp. 1--1, 2020.

\bibitem{Thornton2020}
C.~E. Thornton \emph{et~al.}, ``{Deep Reinforcement Learning Control for Radar
  Detection and Tracking in Congested Spectral Environments},'' \emph{{IEEE
  Trans. Cogn. Commun. and Netw.}}, 2020.

\bibitem{Willems1995}
F.~Willems, Y.~Shtarkov, and T.~Tjalkens, ``The context-tree weighting method:
  basic properties,'' \emph{IEEE Transactions on Information Theory}, vol.~41,
  no.~3, pp. 653--664, 1995.

\bibitem{Willems1998}
F.~Willems, ``{The context-tree weighting method: extensions},'' \emph{IEEE
  Transactions on Information Theory}, vol.~44, no.~2, pp. 792--798, 1998.

\bibitem{Begleiter2004}
R.~Begleiter, R.~El-Yaniv, and G.~Yona, ``{On Prediction Using Variable Order
  Markov Models},'' \emph{Journal of Artificial Intelligence Research},
  vol.~22, no.~12, 2004.

\bibitem{Farias2010}
V.~F. {Farias}, C.~C. {Moallemi}, B.~{Van Roy}, and T.~{Weissman}, ``Universal
  reinforcement learning,'' \emph{IEEE Transactions on Information Theory},
  vol.~56, no.~5, pp. 2441--2454, 2010.

\bibitem{Woodward1951}
P.~M. Woodward, ``{Information Theory and the Design of Radar Receivers},''
  \emph{Proceedings of the IRE}, vol.~39, no.~12, pp. 1521--1524, 1951.

\bibitem{Gastpar2003}
M.~Gastpar, B.~Rimoldi, and M.~Vetterli, ``To code, or not to code: lossy
  source-channel communication revisited,'' \emph{IEEE Transactions on
  Information Theory}, vol.~49, no.~5, pp. 1147--1158, 2003.

\bibitem{Tatikonda2009}
S.~Tatikonda and S.~Mitter, ``The capacity of channels with feedback,''
  \emph{IEEE Transactions on Information Theory}, vol.~55, no.~1, pp. 323--349,
  2009.

\bibitem{trees1992detection}
H.~L.~V. Trees, \emph{Detection, Estimation, and Modulation Theory: Radar-Sonar
  Signal Processing and Gaussian Signals in Noise}.\hskip 1em plus 0.5em minus
  0.4em\relax Krieger Publishing Co., Inc., 1992.

\bibitem{Ward1997}
K.~Ward, R.~Tough, and P.~Shepherd, ``{Modelling sea clutter: correlation,
  resolution and non-{Gaussian} statistics},'' in \emph{Radar 97 (Conf. Publ.
  No. 449)}, 1997, pp. 95--99.

\bibitem{Siegal1994}
A.~Siegel, A.~Ochadlick, J.~Davis, W.~Everett, J.~Kraus, M.~Rankin, J.~Lyden,
  and C.~Rino, ``{Spatial and temporal correlation of LOGAN-1 high-resolution
  radar sea clutter data},'' in \emph{Proceedings of IGARSS '94 - 1994 IEEE
  International Geoscience and Remote Sensing Symposium}, vol.~2, 1994, pp.
  818--821 vol.2.

\bibitem{Rosenberg2012}
L.~Rosenberg, ``{The effect of temporal correlation with K and KK-distributed
  sea-clutter},'' in \emph{2012 IEEE Radar Conference}, 2012, pp. 0303--0308.

\bibitem{Charlish2015}
A.~Charlish and F.~Hoffmann, ``{Anticipation in cognitive radar using
  stochastic control},'' in \emph{2015 IEEE Radar Conference (RadarCon)}, 2015,
  pp. 1692--1697.

\bibitem{Ward1990}
K.~Ward, C.~Baker, and S.~Watts, ``Maritime surveillance radar part 1: Radar
  scattering from the ocean surface,'' \emph{Proc. IEE}, vol. 137, no.~2, pp.
  51--62, 1990.

\bibitem{Ganti2009}
R.~K. Ganti and M.~Haenggi, ``Spatial and temporal correlation of the
  interference in aloha ad hoc networks,'' \emph{IEEE Communications Letters},
  vol.~13, no.~9, pp. 631--633, 2009.

\bibitem{Krishnan2017}
S.~Krishnan and H.~S. Dhillon, ``Spatio-temporal interference correlation and
  joint coverage in cellular networks,'' \emph{IEEE Transactions on Wireless
  Communications}, vol.~16, no.~9, pp. 5659--5672, 2017.

\bibitem{Griffiths2015}
H.~Griffiths, L.~Cohen, S.~Watts, E.~Mokole, C.~Baker, M.~Wicks, and S.~Blunt,
  ``Radar spectrum engineering and management: Technical and regulatory
  issues,'' \emph{Proceedings of the IEEE}, vol. 103, no.~1, pp. 85--102, 2015.

\bibitem{Haykin2012}
S.~Haykin, Y.~Xue, and P.~Setoodeh, ``Cognitive radar: Step toward bridging the
  gap between neuroscience and engineering,'' \emph{Proceedings of the IEEE},
  vol. 100, no.~11, pp. 3102--3130, 2012.

\bibitem{Thornton2021b}
C.~E. Thornton, R.~M. Buehrer, and A.~F. Martone, ``{Constrained Contextual
  Bandit Learning for Adaptive Radar Waveform Selection},'' \emph{IEEE
  Transactions on Aerospace and Electronic Systems}, pp. 1--1, 2021.

\bibitem{Selvi2020}
E.~Selvi, R.~M. Buehrer, A.~Martone, and K.~Sherbondy, ``{Reinforcement
  Learning for Adaptable Bandwidth Tracking Radars},'' \emph{IEEE Transactions
  on Aerospace and Electronic Systems}, vol.~56, no.~5, pp. 3904--3921, 2020.

\bibitem{Krishnamurthy2002}
V.~Krishnamurthy, ``{Algorithms for optimal scheduling and management of hidden
  {Markov} model sensors},'' \emph{IEEE Transactions on Signal Processing},
  vol.~50, no.~6, pp. 1382--1397, 2002.

\bibitem{Krishnamurthy2009}
V.~Krishnamurthy and D.~V. Djonin, ``{Optimal Threshold Policies for
  Multivariate {POMDPs} in Radar Resource Management},'' \emph{IEEE
  Transactions on Signal Processing}, vol.~57, no.~10, pp. 3954--3969, 2009.

\bibitem{Ristic2004}
B.~Ristic, S.~Arulampalam, and N.~Gordon, \emph{{{Beyond the Kalman Filter:
  Particle Filters for Tracking Applications}}}.\hskip 1em plus 0.5em minus
  0.4em\relax Boston: Artech House, 2004.

\bibitem{Ziv1977}
J.~Ziv and A.~Lempel, ``A universal algorithm for sequential data
  compression,'' \emph{IEEE Transactions on Information Theory}, vol.~23,
  no.~3, pp. 337--343, 1977.

\bibitem{Merhav1998}
N.~Merhav and M.~Feder, ``Universal prediction,'' \emph{IEEE Transactions on
  Information Theory}, vol.~44, no.~6, pp. 2124--2147, 1998.

\bibitem{VanTrees2001}
H.~L. Trees, \emph{{Detection, Estimation, and Modulation Theory: Part III.
  Radar-sonar Signal Processing and Gaussian Signal in Noise}}.\hskip 1em plus
  0.5em minus 0.4em\relax John Wiley \& Sons, Incorporated, 2001.

\bibitem{Bell2015}
K.~L. {Bell}, C.~J. {Baker}, G.~E. {Smith}, J.~T. {Johnson}, and
  M.~{Rangaswamy}, ``Cognitive radar framework for target detection and
  tracking,'' \emph{IEEE Journal of Selected Topics in Signal Processing},
  vol.~9, no.~8, pp. 1427--1439, 2015.

\bibitem{Lattimore2020}
C.~Szepesvari and T.~Lattimore, \emph{{Bandit Algorithms}}.\hskip 1em plus
  0.5em minus 0.4em\relax Cambridge University Press, 2020.

\bibitem{Bertsekas2006}
D.~P. Bertsekas, \emph{{Dynamic Programming and Optimal Control, vol.
  2}}.\hskip 1em plus 0.5em minus 0.4em\relax Belmont, MA: Athena Scientific,
  2006.

\bibitem{Yang2005}
S.~Yang, A.~Kavcic, and S.~Tatikonda, ``Feedback capacity of finite-state
  machine channels,'' \emph{IEEE Transactions on Information Theory}, vol.~51,
  no.~3, pp. 799--810, 2005.

\bibitem{Cover2006}
T.~M. Cover and J.~A. Thomas, \emph{{{Elements of Information Theory, Second
  Edition}}}.\hskip 1em plus 0.5em minus 0.4em\relax Hoboken, NJ: Wiley, 2006.

\bibitem{Kearns2002}
M.~Kearns and S.~Singh, ``{Near-Optimal Reinforcement Learning in Polynomial
  Time},'' \emph{Machine Learning}, vol.~49, pp. 209--232, 2002.

\bibitem{Mastin2012}
A.~Mastin and P.~Jaillet, ``{Loss bounds for uncertain transition probabilities
  in {Markov} decision processes},'' in \emph{2012 IEEE 51st IEEE Conference on
  Decision and Control (CDC)}, 2012, pp. 6708--6715.

\bibitem{Cochran2009}
D.~{Cochran}, S.~{Suvorova}, S.~D. {Howard}, and B.~{Moran}, ``{Waveform
  Libraries},'' \emph{IEEE Signal Processing Magazine}, vol.~26, no.~1, pp.
  12--21, 2009.

\bibitem{Howard2004}
S.~D. {Howard}, S.~{Suvorova}, and W.~{Moran}, ``{Waveform libraries for radar
  tracking applications},'' in \emph{2004 International Waveform Diversity and
  Design Conference}, 2004, pp. 1--5.

\bibitem{Suvorova2006}
S.~{Suvorova}, S.~D. {Howard}, W.~{Moran}, and R.~J. {Evans}, ``{Waveform
  Libraries for Radar Tracking Applications: Maneuvering Targets},'' in
  \emph{2006 40th Annual Conference on Information Sciences and Systems}, 2006,
  pp. 1424--1428.

\bibitem{Papandreou1994}
A.~Papandreou, G.~Boudreaux-Bartels, and S.~Kay, ``Detection and estimation of
  generalized chirps using time-frequency representations,'' in
  \emph{Proceedings of 1994 28th Asilomar Conference on Signals, Systems and
  Computers}, vol.~1, 1994, pp. 50--54 vol.1.

\bibitem{Raftery1985}
A.~E. Raftery, ``{A model for high-order Markov chains},'' \emph{Journal of the
  Royal Statistical Society: Series B (Methodological)}, vol.~47, no.~3, pp.
  528--539, 1985.

\bibitem{Thornton2021a}
C.~E. Thornton, R.~M. Buehrer, and A.~F. Martone, ``{Online Meta-Learning for
  Scene-Diverse Waveform-Agile Radar Target Tracking},'' \emph{arXiv preprint
  arXiv:2110.11450}, 2021.

\end{thebibliography}

\end{document}